\def\input@path{{../}}\makeatother
\newcommand\mcH{\mathcal{H}}
\newcommand\mcE{\mathcal{E}}
\newcommand\mcN{\mathcal{N}}
\newcommand\mcF{\mathcal{F}}
\newcommand\mcC{\mathcal{C}}
\newcommand\mcX{\mathcal{X}}
\newcommand\mcT{\mathcal{T}}
\newcommand\mcD{\mathcal{D}}
\newcommand\mbfC{\mathbf{C}}
\newcommand\mbE{\mathbb{E}}
\newcommand\mbone {\mathbbm{1}}
\newcommand\id{\text{id}}
\renewcommand\trace {\mathrm{Tr}}
\newcommand\capc{C}
\newcommand\rqad{\mathsf{R}}
\theoremstyle{remark}	\newtheorem{theorem}{Theorem}
\theoremstyle{remark}	\newtheorem{lemma}[theorem]{Lemma}
\theoremstyle{remark}	
\theoremstyle{remark}	\newtheorem{proposition}[theorem]{Proposition}
\theoremstyle{remark} \newtheorem{definition}{Definition}
\theoremstyle{remark} \newtheorem{remark}{Remark}
\theoremstyle{remark} \newtheorem{example}{Example}
\title{Quantum Action-Dependent Channels}
\author{
    \IEEEauthorblockN{Michael Korenberg and Uzi Pereg}
    }
\begin{document}

\maketitle
\setcounter{page}{1} 
\thispagestyle{plain}

\begin{abstract}
We study communication over a quantum action-dependent channel, where the transmitter first performs an action that ``shocks" the channel environment, and subsequently encodes a message into a transmission sent through the channel. This two-stage interaction arises in various settings, including rewriting over defective memory and quantum effects such as measurement-induced state collapse. Our model can be viewed as a quantum generalization of Weissman's classical action-dependent channel (2010). Here, however, Alice cannot have a copy of the environment state due to the no-cloning theorem. Instead, she may share entanglement with this environment. 
We derive achievable rates for reliable message transmission via the quantum action-dependent channel, with either causal or non-causal channel side information (CSI). 
 As a case study, we analyze memory storage with depolarization and selective rewriting, demonstrating how action-dependent control influences performance.
\end{abstract}

\begin{IEEEkeywords}
Quantum communication, channel capacity, channel state, channel side information, action dependence, selective-rewrite memory.
\end{IEEEkeywords}

\section{Introduction} 
\label{section:Introduction}
A fundamental problem in information theory is the characterization of 
communication over channels affected by random parameters \cite{Pradhan2020,Zivarifard2022,
Belzig2024,
Chen2025,yao2026nonsignaling}. Since Shannon’s seminal work on side information, it has become clear that channel side information (CSI) plays a central role in determining achievable rates \cite{Dupuis2009}. Gel'fand and Pinsker \cite{pinsker1980coding} further characterized the capacity when non-causal CSI is available at the encoder. 
These foundational results have given rise to a rich body of work on communication over random-parameter channels~%
\cite{
Keshet2008}.

In the random-parameter paradigm%
, the parameters are typically drawn from nature and cannot be controlled by the communicating parties \cite{Dupuis2009,pinsker1980coding,Keshet2008}.
The parameters influence the channel by altering its transition law.
In the classical setting, the channel is described in terms of a probability function
$
  P_{Y|X,S}(\cdot|x, s)
$, where $X$ represents Alice's transmission 
and $Y$ is Bob's observation at the channel output. The random parameter $S$ has a specified distribution and its variation can significantly affect the channel output. For this reason,   
side information, i.e., the knowledge of $S$, 
has a profound effect on capacity.

Weissman \cite{weissman2010} introduced the \emph{action-dependent channel}. In this model, the encoder first selects an
action sequence, which in turn, generates the channel parameters in a noisy fashion. 
The overall channel input then depends on both the message and the induced parameters. This two-stage procedure captures a 
broad class of practical problems, such as memories with defects \cite{Kim2016}, magnetic recording with rewriting \cite{HeegardElGamal1983}, and other scenarios in which the transmitter can probe or partially
control the channel before communication, as was demonstrated in \cite{kittichokechai2012multi}. For example, a two-stage coding strategy can steer a defective memory to improve reliability: first, the transmitter writes to the memory and immediately tries to read it back to learn about defects. Then, the transmitter rewrites the defective bits based on that information.
Extensions include the probing capacity, which quantifies the information that 
can be learned through actions 
\cite{asnani2011probing} 
and 
action cost constraints \cite{kittichokechai2015coding}.
The framework has also been generalized to multi-user communication, with generalizations to broadcast channels \cite{steinberg2012degraded, ahmadi2012channels} and multiple-access channels \cite{dikstein2014mac}. 
Furthermore, its implications for security have been explored in the context of wiretap channels \cite{dai2013wiretap, dai2020impact} and other secure communication
settings \cite{welling2024transmitter, zivarifard2025covert}, where the action channel acts as a broadcast channel influenced by the transmitter's actions.

The ongoing development of quantum information theory is foundational for engineering next-generation communication and computation systems \cite{Jouguet2013Experimental, Orieux2016Recent, Petit2020Universal,Wang2022,Bassoli2023}. By leveraging the principles of quantum mechanics, this field aims to overcome the limitations of classical technologies. 
Quantum technology also unlocks entirely new phenomena with no classical parallel, such as entanglement, i.e., the strongest resource of quantum correlation,
as well as the no-cloning theorem, which forbids the perfect duplication of  quantum information, motivating a deeper study of fundamental communication limits.
Action dependence appears in quantum communication as well. For example, a quantum measurement by the encoder on the transmission system could result in a state collapse of the channel input environment.

Quantum environment-dependent channels are particularly relevant to scenarios that involve not only the transmission of a message but also parameter estimation, a central task in fields such as quantum metrology. These types of quantum Gel'fand-Pinsker channels have been studied, both with and without entanglement assistance
\cite{Dupuis2009} (see also \cite{pereg2019entanglement}).
Recently, Yao and Jafar studied the role of nonsignaling correlations and quantum entanglement assistance for classical channels with CSIT, showing that nonsignaling assistance can virtually ``teleport'' the transmitter's state knowledge to the receiver in both causal and non-causal settings \cite{yao2025virtual, yao2026nonsignaling}, and that entanglement assistance can strictly improve the capacity and even activate the
zero-error capacity of classical channels with causal state information \cite{yao2026quantum}.
The security implications of side information have also been explored in the quantum setting
through wiretap channels \cite{Anshu2020} and covert communication \cite{zivarifard2024covert}. 
Other variations include scenarios in which the decoder performs parameter estimation \cite{Pereg2022}. 

Beyond passive environment dependence, many emerging quantum technologies involve active manipulation of the communication environment by the transmitter.
For example, in joint quantum communication and sensing \cite{Wang2022ITW,liu2024quantum},
adaptive probing signals are used both to estimate properties of the physical medium and to transmit information.
Because probing a quantum system generally disturbs its state through measurement backaction,
the transmitter’s actions directly influence the effective communication channel.
Similarly, quantum error mitigation protocols for noisy intermediate-scale quantum devices
\cite{endo2018practical,cai2023quantum}
use controlled interventions to characterize and partially shape the effective noise process prior to computation or communication.
In such settings, the communication environment is not merely an externally given random process,
but is itself influenced by the transmitter’s actions.
These scenarios naturally motivate a quantum action-dependent channel framework.

\begin{figure*}[tb]
  \centering
  \resizebox{0.85\textwidth}{!}{
      \begin{tikzpicture}[%
    >=latex,
    node distance=3.0cm and 3cm,
    on grid,
    semithick,
    font=\large,
    box/.style={draw, minimum width=2cm, minimum height=1.5cm, align=center},
    largebox/.style={draw, minimum width=2cm, minimum height=1.5cm, align=center}
]

\node (m) {$M$};
\node[box, right=6cm of m] (Enc) {Encoder};
\node[box, right=5cm of Enc] (N) {$\mathcal{N}_{SA \to B}^{\otimes n}$};
\node[box, right=5cm of N] (Dec) {Decoder};
\node[right=4cm of Dec] (mhat) {$\hat{M}$};

\draw[->] (m) -- (Enc) node[midway, coordinate] (mEncMid) {};
\draw[->, red] (Enc) -- node[above] {$A^{n}$} (N);
\draw[->, blue] (N) -- node[above] {$B^{n}$} (Dec);
\draw[->] (Dec) -- (mhat);

\node[largebox, above=3.5cm of Enc] (T) {$T_{G \to S S_0}^{\otimes n}$};

\node[box, left = 4.5cm of T] (ActionEnc) {Action Encoder};

\draw[->] (ActionEnc) -- node[pos=0.5,above] {$G^{n}$} (T);

\draw[->] (T.south) -- node[pos=0.6,left] {${S_0}^n$} (Enc.north);

\draw[->, green!40!black] (T.east) -| node[pos=0.85, right] {$S^n$} (N.north);

\coordinate (cross) at (ActionEnc |- m);
\fill (cross) circle (2pt);
\draw[->] (cross) -- (ActionEnc);

\coordinate (Alice_mid) at ($(ActionEnc.south east)!0.5!(Enc.north west) - (0.9,0.9cm)$);
\coordinate (AE_nwest) at (m.south west |- ActionEnc.north west);
\coordinate (AE_neast) at (Alice_mid |- AE_nwest);
\coordinate (E_neast) at (Enc.north east |- Alice_mid);
\coordinate (E_seast) at ($(Enc.south east)$);
\coordinate (M_west) at ($(m.south west)$);

\begin{pgfonlayer}{background}
    \def\padding{0.4cm}
    
    \draw[draw=red, line width=1pt, rounded corners, fill=pink, opacity=0.2]
        ($(AE_neast) + (\padding, \padding)$) coordinate (alice_top) -- 
        ($(Alice_mid) + (\padding, \padding)$) -- 
        ($(E_neast) + (\padding, \padding)$) --  
        ($(E_seast) + (\padding, -\padding)$) -| 
        ($(M_west) + (-\padding, \padding)$)  -- 
        ($(AE_nwest) + (-\padding, \padding)$) 
         -- cycle;

    \node[draw=blue, line width=1pt, rounded corners, fill=blue!20, opacity=0.2, inner sep=0.5cm,
          fit=(Dec) (mhat), label={[yshift=-2.5cm, color=blue!70!black, font=\large]above:Bob}] (bob_enclosure) {};
\end{pgfonlayer}

\node[above=-5.8cm, color=red!70!black] at (alice_top) {Alice};

\end{tikzpicture} 
  }
  \caption{Coding over a quantum action-dependent channel with non-causal CSI. Here Alice acts as the Action encoder, encoding the 
  message $M$ into an action sequence $G^n$, and the main encoder, encoding the message and side information $S_0^n$ into 
  the channel input $A^n$. The action sequence $G^n$ is fed into the action channel $\mcT^{\otimes n}_{G \to S_0}$,  with the Stinespring dilation $T^{\otimes n}_{G \to SS_0}$ which produces the environment state $S^n$ and side-information $S_0^n$ for Alice. The quantum communication channel $\mcN^{\otimes n}_{SA \to B}$ 
  takes the environment state $S^n$ and input $A^n$, producing the output $B^n$, which is measured by Bob to decode the message.}
  \label{fig:environment-dependent-channel}
\end{figure*}

In this paper, we study the quantum action-dependent channel. Our action dependence model modifies the standard environment-dependent paradigm by allowing the transmitter's actions to ``shock" the 
quantum environment, which in turn governs the channel transformation. 
Specifically, an encoder (Alice) first encodes a classical message $M$ into an action sequence $G^n$, which is fed into a quantum action channel 
$\mcT^{\otimes n}_{G \to S_0}$, associated with a Stinespring representation isometry $T_{G \to SS_0}$. See Figure~ \ref{fig:environment-dependent-channel}. 
The action channel produces side information $S_0^n$ for Alice, while its 
Stinespring isometry jointly produces the pure state $\ket{\sigma_{SS_0}^{\otimes n}}$
on the side-information system $S_0^n$ and the inaccessible channel environment $S^n$.
Alice then encodes the message and the side information into her transmission $A^n$, which is sent through the quantum communication channel $\mcN^{\otimes n}_{SA \to B}$.
The receiver (Bob) obtains the output sequence $B^n$, and performs a measurement in order to estimate Alice's message.

Our framework can be viewed as the quantum counterpart of the classical action-dependent channel introduced by Weissman \cite{weissman2010}. However, the generalization is nontrivial due to fundamental quantum principles such as the no-cloning theorem. While a classical channel parameter can be perfectly copied and then sent back to the transmitter, an unknown quantum state cannot. 
Thereby, side information is modeled as the output of  
the action channel $\mcT_{G \to S_0}$. The isometric extension (Stinespring dilation) produces a pure bipartite state $\sigma_{SS_0}$ on two distinct systems: $S$, the inaccessible environment that affects the channel, and $S_0$, the side 
information available to Alice.

We derive achievable rates and regularzied capacity formulas for both the causal and noncausal settings. Our achievability proof is based on quantum one-shot information-theoretic methods.
As opposed to the classical analysis~\cite{weissman2010}, our techniques establish non-asymptotic performance bounds by directly analyzing the error probability for a finite number of channel uses, rather than relying on asymptotic arguments. The introduction of action dependence makes our analysis more challenging than in previous environment-dependent channel models \cite{Anshu2020, Pereg2022}. In previous settings, the channel environment and side information are set by an external source. Here, however, the transmitter's action \textit{induces} the shared entangled state. Consequently, our analysis must account for this additional layer of control. The formula of our capacity bound thus includes optimization over not only the input state, but also a quantum ensemble for the sender's action.

The rest of the paper is organized as follows. In Section~\ref{section:Notation}, we introduce the notation and definitions used throughout the paper. In Section~\ref{section:ADCoding}, we formally define the quantum action-dependent channel model. In Section~\ref{section:Main_Result}, we present our main result, an achievable rate for this channel. In Section~\ref{section:examples}, we consider memory with depolarization and selective rewrite, and study the implications of our results on this channel. In Section~\ref{section:Coding}, we describe the one-shot coding scheme used to prove the achievability result. Sections \ref{section:proof of main theorem} and \ref{section:proof of causal theorem} 
present a detailed analysis for the non-causal and causal settings, respectively. Section~\ref{section:summary} provides a summary and discussion on open problems and applications, such as joint quantum communication and sensing and quantum error mitigation. Key lemmas are proved in Appendices~\ref{appendix:proof_of_lem1}--\ref{appendix:cardinality}, and detailed derivations for the examples are given in Appendix~\ref{appendix:example_derivations}.


\section{Notation and Basic Definitions}
\label{section:Notation}
\subsection{Quantum States and Channels}
Quantum systems are denoted by uppercase letters 
(e.g., $A, B$) and their corresponding finite-dimensional Hilbert spaces by $\mcH_A, \mcH_B$. 
The corresponding dimension is denoted by $\abs{\mcH_A}$.
The set of density operators on $\mcH_A$ is $\mathscr{D}(\mcH_A)$. 
Quantum states (density operators) are denoted by Greek letters, e.g., $\rho, \sigma$. 
A POVM is a set of positive semi-definite operators $\{D_m\}$ that satisfy $\sum_m D_m = \mbone$, where $\mbone$ denotes the identity operator. If the quantum state before the measurement is $\rho$, then the probability of an outcome $m$ is $\Pr(m) = \Tr(D_m \rho)$.

A quantum channel $\mcN_{A\to B}$ is a completely positive trace-preserving (CPTP) map. We write $\id_A$ for the identity channel on system $A$. Here, we consider a quantum channel $\mcN_{SA \to B}$, where $A$, $S$, and $B$ are associated with the transmitter (Alice), the channel environment (``channel state"), and the receiver (Bob).
The channel can be represented through its Stinespring dilation, in terms of an isometry $V_{SA \to BE}$ that couples the output system to an auxiliary environment $E$. 
Namely,
\begin{align}
    \mcN_{SA \to B}(\rho_{SA}) = \Tr_{E} \left[V \rho_{SA} V^{\dag}\right].
\end{align}
for $V\equiv V_{SA\to BE}$ that satisfies
$V^\dagger V=\mbone_{SA}$.
We will see that in the action-dependent model, the channel environment $S$ is affected by the encoding operation through an action map $T_{G\to SS_0}:\mathcal{H}_G\to \mathcal{H}_{S}\otimes \mathcal{H}_{S_0}$, as shown in  Figure~\ref{fig:environment-dependent-channel}, where $T_{G\to SS_0}$ is an isometry.
Here, $S_0$ represents channel side information (CSI) that is available to Alice, while $S$ is the channel's inaccessible environment.  
We provide further details and explanation in Section~\ref{section:ADCoding} below.

We assume that the channel is memoryless. That is, once Alice encodes an action sequence
$G^n\equiv G_1,\ldots,G_n$, her action state $\rho_{G^n}$ undergoes the tensor-product isometry $T_{G\to SS_0}^{\otimes n}$. Similarly, the input sequence $(S^n,A^n)$ is transmitted through the channel $\mathcal{N}_{SA\to B}^{\otimes n}$.
%

\subsection{Information Measures}
For a quantum state $\rho \in \mathscr{D}(\mcH)$, the von Neumann entropy is
\begin{align}
    H(\rho) = -\Tr(\rho \log \rho).
\end{align}
For a bipartite state $\rho_{AB}\in \mathscr{D}(\mcH_A\otimes\mcH_B)$, the quantum mutual information is defined as:
\begin{align}
    I(A;B)_{\rho} = H(\rho_A) + H(\rho_B) - H(\rho_{AB}),
\end{align}
and the conditional entropy as $H(A|B)_{\rho} = H(\rho_{AB}) - H(\rho_B)$. Unlike its classical counterpart, the quantum conditional entropy can be negative.

The quantum relative entropy between two states $\rho$ and $\sigma$ in 
$\mathscr{D}(\mathcal{H})$
is defined as $D(\rho \| \sigma) = \Tr\left(\rho \left(\log\rho - \log\sigma\right)\right)$ if 
$\mathrm{supp}(\rho) \subseteq \mathrm{supp}(\sigma)$, and $D(\rho \| \sigma)=+\infty$ otherwise.
The sandwiched R\'enyi divergence \cite{MuellerLennert2013} is defined as 
 \begin{align}
     \tilde{D}_{\alpha}\left(\rho \| \sigma \right) \coloneqq  \frac{1}{\alpha - 1} \log \Tr[\sigma^{\frac{1-\alpha}{2\alpha}} \rho \sigma^{\frac{1-\alpha}{2\alpha}}]^{\alpha}.
 \end{align}


Other key definitions are given below:
\begin{enumerate}
    \item \textbf{Pinching}  \cite{tomamichel2015quantum}:
      For a Hermitian operator $A = \sum_{i} a_i \Pi_i$ with projectors $\Pi_i$ in its eigenspaces, the pinching map is
      \begin{align}
        \mathcal{E}_A(B) := \sum_{i} \Pi_i B \Pi_i.
      \end{align}
      This operation has the properties of a quantum channel and projects $B$ onto a block-diagonal structure dictated by the eigenspaces of $A$. 
      One of the properties of the pinching map is that the resulting operator, $\mathcal{E}_A(B)$, always commutes with $A$, i.e., $[A, \mcE_A(B)] = 0$.
      Another key property of a pinching map is the pinching inequality. Let $\nu_A$ be the number of distinct nonnegative eigenvalues of $A$, then:
      \begin{align}
          B \leq \nu_A \mcE_A (B). \label{eq:pinching_inequality}
      \end{align}
      %
    \item \textbf{Fidelity}: For $\rho, \sigma \in \mathscr{D}(\mathcal{H})$,
    \begin{align}
            F(\rho, \sigma) := \left\|\sqrt{\rho} \sqrt{\sigma}\right\|_1.
    \end{align}
        

    \item \textbf{Purified Distance}:  For $\rho, \sigma \in \mathscr{D}(\mathcal{H})$,
    \begin{align}
            P(\rho, \sigma) := \sqrt{1-F^2(\rho, \sigma)}.
    \end{align}
\end{enumerate}
These quantities provide a geometric measure of distance in the space of density matrices.
\section{Action-Dependent Coding}
\label{section:ADCoding}
Before presenting our main results, we introduce a code for the transmission of messages via a quantum action-dependent channel, where the encoder selects an action that affects the channel environment.
Specifically, Alice has two roles: she  encodes both the action $G$ and the transmission $A$ through the channel. Her action encoder sends $G$ through 
an action channel, which is associated with a Stinespring isometry $T_{G \to SS_0}$. This produces
Alice's CSI $S_0$ and the inaccessible channel environment $S$. 
%
%
\begin{definition}[Action-Dependent Code]
  An $(M,n)$ code for communication over a quantum action-dependent channel, $\mcN_{SA \to B}$ governed by an action map 
  $T_{G \to SS_0}$, with non-causal CSI at the encoder, consists of:
  \begin{enumerate}
      \item An \emph{encoder} that comprises 
      two stages:
      \begin{itemize}
          \item a classical-quantum action encoder $\mathcal{L}: \{1,\ldots,M\}\to \mathscr{D}(\mcH_G^{\otimes n}) $ 
          that maps the message to a pure quantum action state. 
          \item a transmission encoder $\mcE^{(m)}_{S_{0}^{n} \to A^n}$  that receives the side-information $S_0^n$ and prepares the channel input $A^n$. 
      \end{itemize}
      \item A \emph{decoding measurement}, i.e., a POVM $\{D_m\}_{m=1}^M$ on the output Hilbert space $\mcH_B^{\otimes n}$.
  \end{enumerate}
\end{definition}
The coding scheme works as shown in  Figure~ \ref{fig:environment-dependent-channel}. Alice selects a uniform message $m\in\{1,\ldots,M\}$.
She first prepares the state of her quantum action $G^n$  using her action encoder:
\begin{align}
\rho_{G^n}^{(m)}=\mathcal{L}(m) \,.
\end{align}
Each quantum action $G_i$ is then sent through the action map $T_{G\to SS_0}$, producing
\begin{align}
\rho_{S^n S_0^n}^{(m)}=T^{\otimes n}_{G \to S S_0} {\rho_{G^n}^{(m)}} (T^{\otimes n}_{G \to S S_0})^\dagger .
\end{align}
Given her access to the side information $S_0^n$, Alice applies the transmission encoder to prepare her input $A^n$ to the communication channel:
\begin{align}
\rho_{S^n A^n}^{(m)}=\mathrm{id}_{S^n}\otimes \mcE^{(m)}_{S_{0}^{n} \to A^n}
(\rho_{S^n S_0^n}^{(m)}).
\end{align}
Both the input environment $S^n$ and Alice's transmission $A^n$ are fed into the channel $\mathcal{N}_{SA\to B}^{\otimes n}$,
hence
\begin{align}
\rho_{B^n}^{(m)}=\mcN^{\otimes n}_{S A\to B}
(\rho_{S^n A^n}^{(m)}).
\end{align}
Bob receives $B^n$. He performs the measurement $\{D_m\}$ to obtain an estimate of Alice's message.  

For an $(M,n, \varepsilon)$ code, the average probability of error is bounded by 
$\varepsilon$, i.e.,
      \begin{align}
        \bar{p}_e^{(n)} \coloneqq 1 -\frac{1}{M}\sum_{m=1}^M \Tr\Bigl[D_m\, \rho_{B^n}^{(m)}\Bigr] \leq \varepsilon. \label{eq:avg_error_defenition}
      \end{align}

\begin{definition}[Achievable Rate]
  A communication rate $R$  is said to be achievable 
  if for every $\varepsilon,\delta>0$ and sufficiently large $n$, there exists a $(2^{n(R-\delta)},n, \varepsilon)$ code for the quantum action-dependent channel $\mcN_{SA \to B}\circ T_{G \to S S_0}$ with non-causal CSI.
%
  The channel capacity $\capc_{\text{n-c}}(\mcN\circ T)$  is defined as the supremum of all achievable rates, where the subscript `n-c' stands for non-causal CSI. 
\end{definition}

\begin{remark}
The setting of an environment-dependent channel $\mathcal{N}_{SA\to B}$ where the encoder cannot influence the channel environment can be viewed as a special case of our model, 
under the constraint that  $\abs{\mathcal{H}_G}=1$. 
This environment-dependent model has been studied in several works, including entanglement-assisted communication with CSI~\cite{Dupuis2009, Dupuis2010t, pereg2019entanglement}, parameter estimation~\cite{Pereg2022}, classical~\cite{pereg2022classical} and quantum state masking~\cite{pereg2021quantum}, secure communication over the quantum Gel'fand-Pinsker wiretap channel~\cite{Anshu2020}, and covert communication over a quantum MAC with a helper~\cite{zivarifard2024covert}.
If, in addition, Alice does not have access to the side information $S_0$, she has no knowledge of the channel environment realization, and thus 
cannot adapt her encoding accordingly. In this case, the environment system $S$ is prepared by the action isometry 
$T_{G \to SS_0}$ but remains inaccessible to Alice, and the side-information system $S_0$ is effectively discarded. 
The model then reduces to that of a ``regular" channel $\bar{\mcN}_{A \to B}$ without environment or action dependence,
where $\bar{\mcN}_{A \to B}(\rho) = \mcN_{SA \to B} \circ \mathrm{Tr}_{S S_0} \left(  T \ketbra{0}_G  T^{\dagger}\otimes \rho \right) $. 
\end{remark}

\begin{remark}
In the action-dependent setting, if Alice does not have access to CSI, then she simply prepares the action state $\rho_G$ that produces the ``best" channel. 
This can be interpreted as environment \emph{assistance}
\cite{karumanchi2016quantum, karumanchi2016classical}.
\end{remark}

Next we consider causal CSI, following the definition in \cite{Pereg2022}. In the causal case, Alice has access to the past and present CSI systems.
That is, at time $i$, she has $\left({S_{0,j}}\right)_{j\leq i}$ .
\begin{definition}[Causal Action-Dependent Code]
  An $(M,n)$ code for communication over a quantum action-dependent channel $\mcN_{SA \to B}\circ
  T_{G \to SS_0}$ with causal CSI, consists of:
  \begin{enumerate}
      \item Action and transmission \emph{encoders}:
      \begin{itemize}
          \item classical-quantum  action encoder $\mathcal{L}: \{1,\ldots,M\}\to \mathscr{D}(\mcH_G^{\otimes n}) $. 
          \item a sequence of  transmission encoders $\mcE^{(m,i)}_{S_0^i \to A^i}$, each acts on   the CSI systems $S_0^i$, up to time $i$, to prepare the channel input $A^i$, for $ i \in  \{1,\ldots,n-1\}$. The encoder sequence must satisfy the causality constraint:
          \begin{align}
            \mcE^{(m,i)}_{S_0^i \to A^i} \otimes \id_{{S_0}_{i+1}} = \Tr_{A_{i+1}} \circ \; \mcE^{(m,i+1)}_{S_0^{i+1} \to A^{i+1}}.
            \label{eq:causal-encoding}
          \end{align}
      \end{itemize}
      \item \emph{Decoder}, i.e., a POVM $\{D_m\}_{m=1}^{2^{nR}}$ on the output Hilbert space $\mcH_B^{\otimes n}$.
  \end{enumerate}
\end{definition}

The channel capacity $\capc_{\text{caus}}(\mcN\circ T)$ is defined accordingly, where the subscript `caus' stands for causal CSI. 
The capacity notation is summarized in Table~\ref{Table:Capacity}.

\begin{remark}
The causality constraint \eqref{eq:causal-encoding} is equivalent to requiring that for every message $m$ and every $i\in\{1,\ldots,n\}$, the induced joint state satisfies
  \begin{align}
  \rho^{(m)}_{S^i A^i}
  &=
  \id_{S^i}\otimes
  \mcE^{(m,i)}_{S_0^i \to A_1,\ldots,A_i}
  \!\left(\rho^{(m)}_{S^i S_0^i}\right),
  \label{eq:causal-joint}
  \\[0.5em]
  \Tr_{A_{i+1}^n}\!\left[\rho^{(m)}_{A^n}\right]
  &=
  \mcE^{(m,i)}_{S_0^i \to A_1,\ldots,A_i}
  \!\left(\sigma^{(m)}_{S_0^i}\right),
  \label{eq:causal-marginal}
  \end{align}
  where $\sigma^{(m)}_{S_0^i}$ denotes the marginal of the side-information state $\rho^{(m)}_{S_0^n}$ on the subsystems ${S_0}_1,\ldots,{S_0}_i$.
\end{remark}


\begin{remark}
\label{remark:causal_operational}
  Operationally, causal encoding can be performed using a memory system $Q_i$ at each time step. 
At time $i$, the encoder applies an encoding map $\bar{\mathcal{E}}_{Q_{i} S_{0,i}\to A_i Q_{i+1}}^{(m,i)}$ on the 
memory system $Q_i$ and the fresh side information $S_{0,i}$, which results in the following
 input state: 
  $\rho_{A^i Q_{i+1} S_{0,i+1}\cdots  S_{0,n} S^n}^{(m)}=  \mathrm{id}_{A^{i-1}}\otimes  \bar{\mathcal{E}}_{Q_{i} S_{0,i}\to A_i Q_{i+1}}^{(m,i)}\otimes \mathrm{id}_{ S_{0,i+1}\cdots  S_{0,n} S^n}
  (\rho_{A^{i-1} Q_{i} S_{0,i}\cdots  S_{0,n} S^n}^{(m)})$, where the input $Q_i$ is the memory produced in the previous step, and
  the output $Q_{i+1}$ is to be used in the next step. 
%
The equivalence between the two formulations follows from Uhlmann's theorem \cite{tomamichel2015quantum}.
\end{remark}

\begin{table}
\caption{Capacity Notation}
\label{Table:Capacity}
\centering
\begin{tabular}{l|lll}
\noalign{\vspace{4pt}}
&Without CSI& Causal CSI & Non-Causal CSI\\ \hline
\noalign{\vspace{2pt}}
No Actions&$C(\mcN)$&$C_{\text{caus}}(\mcN \circ \sigma)$&$C_{\text{n-c}}(\mcN \circ \sigma)$ \\[4pt]
Quantum Action Dependence&$\capc(\mcN \circ T)$&$\capc_{\text{caus}}(\mcN \circ T)$& $\capc_{\text{n-c}}(\mcN \circ T)$
\end{tabular}

\end{table}



\section{Related Work}

We briefly review related work on the capacity of quantum channels for the transmission of classical information, both with and without environment dependence. 
We distinguish between three main models: a quantum channel $\mathcal{N}_{A\to B}$ without environment dependence, an environment-dependent channel $\mathcal{N}_{SA\to B}\circ \sigma_{SS_0}$, and an action-dependent channel \mbox{$\mathcal{N}_{SA\to B}\circ T_{G\to SS_0}$}. See Table~\ref{Table:Capacity}.

\subsection{Without Environment Dependence}
\label{subsection:Holevo}
First, we consider a channel that is not affected by an external environment $S$. 
Let $\mcN_{A \to B}$ be a quantum channel without environment dependence (say, $\abs{\mathcal{H}_S}=1$). The \emph{Holevo information} of the channel is defined as \cite{wilde2017}
\begin{align}
  \chi\left(\mcN\right) \coloneqq \max_{p_X(x),\, \ket{\phi_A^x}} I\left(X;B\right)_{\rho}
  \label{eq:holevo_info}
\end{align}
where the maximization is over the auxiliary variable $X\sim p_X$ over an alphabet of size $|\mcX| \leq |\mcH_A|^2$, and the input state ensemble
$\left\{ \ket{\phi_A^x}{\phi_A^x}\right\}$, with respect to the classical-quantum state
$\rho_{XB} \equiv \sum_{x \in \mcX} p_X(x) \ketbra{x}{x} \otimes \mcN_{A\to B}\left(\ketbra{\phi_A^x}{\phi_A^x}\right)$  at the channel output. 
The HSW theorem \cite{Holevo1998,SchumacherWestmoreland1997} establishes that $\chi\left(\mcN\right)$ is an achievable 
rate for the transmission of classical information over a quantum channel. However, the classical capacity of a general quantum channel requires a \emph{regularization} 
over multiple channel uses \cite{wilde2017}:
\begin{align}
  C(\mcN) = \lim_{n\to\infty}\frac{1}{n}\,\chi\!\left(\mcN^{\otimes n}\right)
  \label{eq:regularized_capacity}
\end{align}
(see Table~\ref{Table:Capacity}).

\begin{remark}
In the classical setting, Shannon's noisy channel coding theorem \cite{shannon1948mathematical} provides a computable single-letter capacity formula. In the quantum setting, however, the expression in \eqref{eq:regularized_capacity} involves an optimization over entangled inputs across $n$ channel uses, which renders the computation generally intractable. The question of whether the capacity $C\left(\mcN\right)$ equals the single-letter expression $\chi\left(\mcN\right)$  reduces to the \emph{additivity} of the Holevo information, i.e., whether
$\chi\!\left(\mcN_1\otimes \mcN_2\right) = \chi\left(\mcN_1\right)+\chi\left(\mcN_2\right)$
holds for all channels $\mcN_1,\mcN_2$ 
(see \cite{shor2004equivalence} and \cite[Sec.~8.3]{holevo2019quantum}).
Hastings \cite{Hastings2009} showed that additivity does not hold in general, 
and as a consequence, the classical capacity of a general quantum channel remains an open problem. Nonetheless, additivity has 
been established for specific classes of channels, including unital qubit channels \cite{King2002} 
and entanglement-breaking channels \cite{Shor2002EB}.
Further discussion on the single letter capacity formula can be found in \cite{Pereg2023}.
Nonetheless, the HSW Theorem \cite{Holevo1998,SchumacherWestmoreland1997} provides a useful and computable achievable rate, i.e., the lower bound
\begin{align}
  C(\mcN) \geq  \chi\!\left(\mcN\right) .
  \label{eq:lower_capacity_0}
\end{align}
\end{remark}

\subsection{With Environment Dependence}
\label{Subsection:Environment_Dependence}
Next, we consider a quantum channel that is affected by an  environment system, $S$.
Let $\mcN_{SA \to B}\circ\sigma_{SS_0}$ be a quantum channel with environment dependence, but no action dependence, as considered in 
~\cite{Dupuis2009, Dupuis2010t, anshu2018building, pereg2019entanglement, Pereg2022, pereg2022classical, pereg2021quantum, Anshu2020, zivarifard2024covert}.
The environment system $S$ is inaccessible, yet the encoder may have access to a CSI system $S_0$ that shares entanglement with the environment.

The setting where the encoder cannot influence the channel environment can be viewed as a special case of our model, 
under the constraint that  $\abs{\mathcal{H}_G}=1$.
 Alice is then effectively neutralized of her influence over the channel environment, as the action system $G$ is degenerate.
The channel $\mcN_{SA \to B}$ has two inputs, Alice's transmission $A$ and the channel environment $S$.
The CSI $S_{0}$ and the channel environment $S$ are in a fixed state $\ket{\sigma_{SS_0}}=T|0\rangle$, which is unaffected by Alice's encoding operation.
We denote the capacity in this case by $C_{\text{n-c}}(\mathcal{N}\circ\sigma)$, as in Table~\ref{Table:Capacity}.

Define the 
single-letter expression
\begin{align}
    \mathsf{R}_{\text{n-c}}(\mathcal{N} \circ \sigma)
    \coloneqq
    \max_{\substack{p_V,\, \rho_{AS}^v \\ \mathrm{Tr}_{VA}\,[\rho_{VAS}] = \sigma_{S}}}
    \bigl[ I(V; B)_{\rho} - I(V; S)_{\rho} \bigr],
    \label{eq:R_0}
\end{align}
where $V \sim p_V$ is a classical auxiliary variable, and the mutual information quantities are 
evaluated on the state
\begin{align}
    \rho_{VAS} = \sum_{v} p_V(v)\,\ketbra{v}{v}_V \otimes \rho_{AS}^v.
\end{align}
such that $\Tr_{S_0}[\sigma_{SS_0}] = \Tr_{VA}[\rho_{VSA}]$. 
Intuitively, Alice may send Bob a compressed representation of her CSI. The term $I(V;S)_\rho$ then captures the communication cost of doing so.

\begin{theorem}[Without action dependence, non-causal CSI 
{\cite[Corollary~1]{Anshu2020}}]
\label{Theorem:Anshu}
    The capacity of the quantum environment-dependent  channel $\mathcal{N}_{SA \to B}$ with non-causal 
    CSI and \emph{without} action dependence satisfies
    \begin{align}
        C_{\text{n-c}}(\mathcal{N} \circ \sigma)
        =
        \lim_{k \to \infty} \frac{1}{k}\,
        \mathsf{R}_{\text{n-c}}\!\left(\mathcal{N}^{\otimes k} \circ \sigma^{\otimes k}\right).
    \end{align}
\end{theorem}

\begin{remark}
The classical Gel'fand-Pinsker channel \cite{pinsker1980coding} is a special case of the environment-dependent model.
In the classical model, a classical channel $P_{Y|X,S_0}$ is governed by an i.i.d. sequence of random parameters, according to
$S_0\sim q(s)$. 
This can be described by a classical state,
\begin{align}
\sigma_{S_0}=\sum_s q(s) \ketbra{s}_{S_0}
\end{align}
where $S_0$ is a classical CSI register that stores the channel parameter value. 
In the quantum case, there exists a purification, 
\begin{align}
\ket{\sigma_{SS_0}}=\sum_s \sqrt{q(s)} \ket{s}_{S}\otimes \ket{s}_{S_0}
\end{align}
where $S$ can be viewed as the \emph{inaccessible environment} that complements the information in the CSI system $S_0$.
Note that this state is typically entangled.
Since Alice has $S_0$ and the channel acts on $S$, one may say that Alice shares entanglement with the channel \cite{Dupuis2009}.  
\end{remark}


\begin{remark}
When the channel is not affected by the environment, say $\mathcal{N}_{AS\to B}=\mathcal{N}'_{A\to B}\circ \trace_S$, the model reduces to that of the standard quantum channel $\mathcal{N}'_{A\to B}$ without environment dependence. 
%
%
%
 In this case, the side information $S_0$ is decoupled from the channel 
transformation, and Alice's access to $S_0$ does not improve performance. The 
penalty term $I(V;S)_\rho$ in \eqref{eq:R_0} vanishes, and the rate 
expression $\mathsf{R}_{\text{n-c}}(\mcN \circ \sigma)$ reduces to the 
Holevo information (cf. \eqref{eq:holevo_info} and \eqref{eq:R_0}). 
\end{remark}

The example below demonstrates the power of side information.
\begin{example}
Consider an environment-dependent qubit depolarizing channel $\mcN_{SA \to B}$ that depends on a classical environment parameter,
\begin{align}
    \mcN_{SA \to B}(\ketbra{s}{s}_S\otimes\rho_A ) 
    = \mathsf{P}_s \, \rho_A \, \mathsf{P}_s,
\end{align}
for $s\in \{0,1,2,3\}$,
where $\mathsf{P}_s$ are the 
Pauli operators: $\mathsf{P}_0 = \mathsf{I}$, $\mathsf{P}_1 = \mathsf{X}$, 
$\mathsf{P}_2 = \mathsf{Y}$, and $\mathsf{P}_3 = \mathsf{Z}$. 
In other words, the noisy channel applies a Pauli error that is controlled by the environment  parameter $s\in \{0,1,2,3\}$.
Suppose the action channel produces the environment 
state
\begin{align*}
\ket{\sigma_{SS_0}}=\frac{1}{2} \sum_{s=0}^3 \ket{s}_S\otimes \ket{s}_{S_0} .
\end{align*}
Note that this yields a symmetric environment state,
\begin{align}
    \sigma_S = \frac{1}{4} \sum_{s=0}^3 \ketbra{s}{s} 
    .
\end{align}
Without CSI, the effective 
channel averages over the environment parameter values, simulating a completely-depolarizing channel, such that the output state is $\frac{\mbone}{2}$, regardless of the channel input. Therefore the capacity is zero, i.e., $C(\mathcal{N})=0$.

Given CSI, Alice can perform a measurement on $S_0$ to obtain the environment parameter value $s$. 
As a result, she knows which Pauli error will occur before she transmits, and can therefore 
``pre-correct" by applying the same Pauli operator to her input: 
$\mcF_{S_0 \to A}(\rho) 
= \mathsf{P}_s \, \rho \, \mathsf{P}_s$. 
Since the Pauli operators are Hermitian and unitary, 
the channel output is
\begin{align}
    \mcN_{SA \to B}\bigl(\ketbra{s}{s}_S \otimes \mcF_{S_0 \to A}(\rho) \bigr) 
    = \mathsf{P}_s (\mathsf{P}_s \, \rho \, \mathsf{P}_s) 
    \mathsf{P}_s = \rho,
\end{align}
and the effective channel is noiseless, hence $C_{\text{n-c}}(\mathcal{N} \circ \sigma)=1$.
%
This demonstrates the power of  CSI: 
while without side information the capacity is zero, 
full knowledge of the environment allows Alice to completely 
cancel out the Pauli error and communicate at the maximal rate.
\end{example}
\section{Main Results}
\label{section:Main_Result}
We now state our main results, the achievable rates for the quantum action-dependent channel $\mcN_{SA \to B}\circ T_{G\to SS_0}$ for causal and non-causal CSI.
We consider the model described in Section~\ref{section:ADCoding}, where Alice prepares a quantum action $G$ that affects the channel environment.
Her action is transmitted through
an action map $T_{G \to SS_0}$, producing
Alice's CSI $S_0$ and the inaccessible channel environment $S$.
Given access to CSI, Alice encodes her message and prepares the transmission $A$ through the communication channel, $\mathcal{N}_{SA\to B}$. 
%
We define the achievable non-causal rate as
\begin{subequations}
\begin{align}
    \rqad_{\text{n-c}} \left( \mcN\circ T \right) = \max_{\substack{p_{UV} \,,\; \ket{\sigma_G^u} \,,\; \rho_{VSA}^{u}  \\ \Tr_{A} [\rho_{VUAS}] = \Tr_{S_0} [\sigma_{VUSS_0}]}} \left[\, I(VU;B)_{\rho} - I(V;S|U)_{\rho}\,\right] \label{eq: R-nonc}
\end{align} 
where $\mcN \circ T$ is a short notation for the action-dependent channel $\mcN_{SA \to B}\circ T_{G\to SS_0}$, and the subscript  `n-c' indicates non-causal CSI at the encoder.
The optimization is over a classical auxiliary pair $(V,U)\sim p_{VU}$, a pure state collection $\{\ket{\sigma_G^u}\}$, 
and a classical-quantum $\rho_{VUSA}$, for 
\begin{align}
    \ket{\sigma_{SS_0}^u} &= T_{G\to SS_0}\ket{\sigma_G^u},
    \label{Equation:phi_SS0}
    \\
    \sigma_{VUSS_0}&=\sum_{v} \sum_u p_{VU}(v,u) \ketbra{v}_V\otimes\ketbra{u}_U\otimes \ketbra{\sigma_{SS_0}^u},
    \\
    \intertext{such that}
    \Tr_{A}[\rho_{VUAS}] &= \Tr_{S_0} [\sigma_{VUSS_0}] \label{eq:marginal_constraint}
\end{align}
\end{subequations}
hence  
$\rho_{VUB}=  \mathrm{id}_{VU}\otimes\mathcal{N}_{SA\to B}(\rho_{VUSA})$.
The alphabet sizes of the auxiliary variables can be restricted to $|\mathcal{U}| \leq |\mathcal{H}_S|^2 |\mathcal{H}_{S_0}|^2$ and $|\mathcal{V}| \leq |\mathcal{H}_{S_0}|^2 |\mathcal{H}_S|^4 |\mathcal{H}_A|^2$.

The formula above can be interpreted as follows.
The variable $U$ is an index over a collection of actions that Alice may choose.
Intuitively,
Alice may send Bob a compressed description $V$ of her CSI. The term $I(V;S|U)_\rho$ then captures the communication cost for a given action (see Subsection~\ref{Subsection:Environment_Dependence}).
For each realization $V=v$,
Alice may prepare a different joint input state $\rho_{SA}^{v,u}$, which represents her
transmission strategy. In other words, the choice of $\rho_{SA}^{v,u}$ encodes both the message and a compressed description of her CSI.
 %
Since $S_0$ is entangled with the channel environment $S$, the selection of $V$ through Alice's interaction
with $S_0$ induces a correlation between~$V$ and~$S$.


\begin{theorem} [Action dependence and non-causal CSI] \label{Theorem:non-causal}
Consider a quantum action-dependent channel $\mathcal{N}_{SA\to B}\circ T_{G\to SS_0}$. 
The rate $R=\rqad_{\text{n-c}} \left( \mcN\circ T \right)$ is achievable with quantum action dependence and  non-causal CSI, i.e., 
    \begin{align}
        \capc_{\text{n-c}} \left( \mcN \circ T \right) \geq  \rqad_{\text{n-c}} \left( \mcN\circ T \right).
    \end{align}
    Furthermore, the capacity of the quantum action-dependent channel $\mathcal{N}_{SA\to B}$ with non-causal CSI satisfies
     \begin{align}
        \capc_{\text{n-c}} \left( \mcN \circ T \right) = \lim_{k\to\infty} \frac{1}{k} \rqad_{\text{n-c}} \left( \mcN^{\otimes k}\circ T^{\otimes k} \right).
    \end{align}
\end{theorem}
The proof of Theorem~\ref{Theorem:non-causal} is given in Section \ref{section:proof of main theorem}, based on
quantum one-shot information-theoretic methods.

\begin{remark}
As opposed to the classical analysis~\cite{weissman2010}, our techniques establish non-asymptotic performance bounds by directly analyzing the error probability for a finite number of channel uses, rather than relying on asymptotic arguments. The introduction of action dependence makes our analysis more challenging than in previous environment-dependent channel models \cite{Anshu2020, Pereg2022}. In previous settings, the channel environment and side information are set by an external source. Here, however, the transmitter's action \textit{induces} the shared entangled state. Consequently, our analysis must account for this additional layer of control. The formula of our capacity bound thus includes optimization over not only the input state, but also a quantum ensemble for the sender's action.
\end{remark}

  \begin{remark}
    Previous work has considered side information 
    when the quantum state $\sigma_{S S_0}$ is fixed and dictated by the model \cite{Anshu2020,zivarifard2024covert, Pereg2022} (see Subsection~\ref{Subsection:Environment_Dependence}).
    This fixed state represents the entanglement between
    the environment       
    subsystem $S$ and the side-information subsystem $S_0$, which is accessible to Alice. By utilizing the 
    side information $S_0$, Alice's encoder can generate entanglement between the channel input and its environment $S$. In this sense, 
    we can think of Alice as being entangled with the channel, and this is a key feature of quantum side information at the transmitter.
    In our model, however, 
    the state $\sigma^u_{S S_0}$ depends on the action encoding $u$ chosen by Alice.
    This adds a degree of freedom that allows Alice to influence
    the channel environment $S$ by selecting different actions.
    Our model is analogous to the classical action-dependent channel model in \cite{weissman2010}, where the channel parameter is a noisy version of Alice's action.
  \end{remark}


\begin{remark}
As a special case, we recover a previous result by Anshu et al. \cite{Anshu2020}.
Specifically,
if the action system $G$ is degenerate, i.e., $|\mcH_G|=1$,  Alice cannot influence the channel environment through the 
choice of action, and the action isometry $T_{G \to SS_0}$ produces a fixed state 
$\ket{\sigma_{SS_0}} = T_{G \to SS_0}\ket{0}_G$ that is independent of Alice's encoding strategy. 
The channel environment $S$ and the side-information subsystem $S_0$ are then determined entirely by this fixed state, 
and Alice's only remaining degree of freedom is the choice of joint input state $\rho_{SA}^v$, through which she can still
exploit the correlation between $S_0$ and the channel environment $S$ to adapt the channel input $A$. The model thus reduces to the 
environment-dependent setting \cite{Anshu2020}, where the environment state  $\sigma_{SS_0}$ is 
fixed, and Alice has access to 
the correlated side information $S_0$. In this case,
our lower bound reduces to the formula from \cite{Anshu2020} without action dependence (see Subsection~\ref{Subsection:Environment_Dependence}), as
the action variable $U$ 
becomes trivial, the auxiliary $V$ assumes the same role as in \eqref{eq:R_0}, and the achievable rate in \eqref{eq: R-nonc} reduces to 
$\mathsf{R}_{\text{n-c}}(\mcN \circ \sigma)$ as defined 
in \eqref{eq:R_0}. Conversely, our action-dependent model 
generalizes the environment-dependent setting of Anshu et al. \cite{Anshu2020}, as we grant Alice 
 the capability to shape the environment state. 
\end{remark}

We now establish an achievable rate for the causal setting, where Alice only has access to past and present CSI. That is, the transmission at time $i$ may only depend on the side information sequence $S_0^i = {S_0}_1,\ldots , {S_0}_i$	
observed up to that point.
Define
\begin{subequations}
\label{Equation:Causal_Lower_Bound}
\begin{align}
    \rqad_{\text{caus}} \left( \mcN \circ T \right) = \max_{p_U, \ket{\sigma_G^u}, \mcF_{S_0 \to A}^u} I(U; B)_{\rho}
\label{Equation:Action_Causal_Rate}
\end{align}
where `caus' stands for causal CSI.
The optimization is over a classical auxiliary variable $U \sim p_U$, a collection of action states $\{\ket{\sigma_G^u}\}$, and a collection of maps $\mathcal{F}_{S_0 \rightarrow A}^u$, such that 
    \begin{align}
    \ket{\sigma_{SS_0}^u} &= T_{G\to SS_0}\ket{\sigma_G^u},
    \label{Equation:phi_SS0_causal}
    \\
    \rho_{SA}^{u} &= \mathrm{id}_{S}\otimes \mathcal{F}_{S_0\to A}^u(\ketbra{\sigma_{SS_0}^u}),
    \\
     \rho_{USA} &= \sum_{u} p_{U}(u)   \ketbra{u}{u}_U \otimes \rho_{SA}^{u} \label{eq:rho_VUSA_Causal} 
    \end{align}
    \end{subequations}
hence  
$\rho_{UB}=  \mathrm{id}_{U}\otimes\mathcal{N}_{SA\to B}(\rho_{USA})$.

\begin{theorem}[Action dependence and causal CSI]
\label{Theorem:causal}
Let $\mathcal{N}_{SA\to B}\circ T_{G\to SS_0}$ be a quantum action-dependent channel.
The rate $R=\rqad_{\text{caus}} \left( \mcN\circ T \right)$ is achievable with quantum action dependence and causal CSI at the transmitter.
Hence, the capacity satisfies
\begin{align}
    \capc_{\text{caus}} \left( \mcN \circ T \right) \geq \rqad_{\text{caus}} \left( \mcN \circ T \right).
\end{align}
Furthermore, the capacity of the quantum action-dependent channel $\mathcal{N}_{SA\to B}$ with causal CSI satisfies
     \begin{align}
        \capc_{\text{caus}} \left( \mcN \circ T \right) = \lim_{k\to\infty} \frac{1}{k} \rqad_{\text{caus}} \left( \mcN^{\otimes k}\circ T^{\otimes k} \right).
        \end{align}
\end{theorem}
The proof of Theorem~\ref{Theorem:causal} is given in Section \ref{section:proof of causal theorem}.
 Unlike the non-causal case, the rate is not penalized by $I(V;S|U)_{\rho}$, as the transmission strategy is chosen independently of the induced side information.

\begin{remark}\label{remark:virtual_channel}
As in Shannon's proof for causal CSI, the communication scheme can be interpreted as coding for a virtual channel  \cite{Shannon1958}, \cite[Sec.~3.1]{Keshet2008} (see also discussion in \cite{pereg2019entanglement}).
In our case, the virtual channel is a classical-quantum channel 
$\mathcal{M}:\mathcal{U}\to \mathscr{D}(\mathcal{H}_B)$, defined by 
\begin{align}
\mathcal{M}(u)=\mathcal{N}_{SA\to B}(\rho_{SA}^u)
\end{align}
for $u\in\mathcal{U}$.
Our lower bound for causal CSI is in fact the same as the Holevo information of this virtual channel, i.e.,
 $\rqad_{\text{caus}} \left( \mcN \circ T \right)=\chi(\mathcal{M})$ (cf. \eqref{eq:holevo_info} and \eqref{Equation:Action_Causal_Rate}).
\end{remark}





\begin{remark}
Shannon \cite{Shannon1958} established that for a classical channel 
$P_{Y|X S_0}(y|x,s)$,
 with causal CSI and no action dependence, the capacity can be expressed as
\begin{align}
C_{\text{caus}}(P_{Y|XS_0})=\max_{p_F} I(F;Y)
\end{align}
such that  $X=F(S_0)$. 
In this formulation, $F:\mathcal{S}_0\to \mathcal{X}$ is an auxiliary function, known as a \emph{Shannon strategy} (see \cite[Remark 6]{Pereg2022}). The optimization is taken over all probability distributions on such strategies. In Shannon’s coding construction, the strategy assigns each parameter realization $S_{0,i}=s$ to an input symbol $X_i=F(s)$.
 Analogously, the collection $\{\mathcal{F}^u_{S_0\to A}\}_{u\in\mathcal{U}}$
 may be interpreted as a family of quantum Shannon strategies, each  maps from the CSI system $S_0$
 to the channel input system $A$.
\end{remark}

\section{
Memory With Depolarization and  Selective Rewrite}
\label{section:examples}
We demonstrate our results  
in Theorems~\ref{Theorem:non-causal} and~\ref{Theorem:causal}
through different scenarios of a qubit memory with depolarization noise and selective rewrite. 
The model can be viewed as a quantum analog of the 
classical memory with selective rewrite \cite[Sec.~V]{weissman2010}.
%
%
We consider 
a quantum memory block where the writer (Alice) writes a qubit state into storage, where the memory itself may introduce a depolarization error into the stored qubit. Alice can sample the error index and if needed, rewrite the faulty entry before the reader (Bob) reads from memory.
Denote the qubit depolarizing channel as 
\begin{align}
\mathcal{D}_p(\rho)&=(1-p)\rho + {\tfrac{p}{3}}\Big[\mathsf{X}\rho\mathsf{X}  + \mathsf{Y}\rho\mathsf{Y}  + \mathsf{Z}\rho\mathsf{Z}\Big]
\\
&=\left(1-\frac{4p}{3}\right)\rho+\left(\frac{4p}{3}\right)\frac{\mbone}{2},
\end{align}
for $p \in \left[0,\frac{3}{4}\right]$ (see \cite[Sec. 4.7.4]{wilde2017}). We now define an action dependent channel $\mathcal{N}_{SA\to B}\circ T_{G\to SS_0}$.

Consider an action map such that the CSI register $S_0$ stores a flag that indicates which Pauli error has occurred, $\mbone $, $\mathsf X$, $\mathsf Y$, or $\mathsf Z$. Specifically,
\begin{align}
    T_{G\to SS_0} = \sqrt{1-p}\,\mbone \otimes \ket{0}_{S_0} + \sqrt{\tfrac{p}{3}}\Big[\mathsf{X}  \otimes \ket{1}_{S_0} + \mathsf{Y} \otimes \ket{2}_{S_0} + \mathsf{Z} \otimes \ket{3}_{S_0}\Bigr].
\end{align}
Note that this isometry is associated with the Stinespring dilation of the depolarizing channel, i.e., $\mathcal{D}_p(\rho)=\trace_{S_0}\left[ T\rho T^\dagger \right]$ for $T\equiv T_{G\to SS_0}$. 
 By measuring the CSI system $S_0$, Alice can learn whether her stored qubit was corrupted.
When actions are encoded using the computational basis, $\{\ket{0}_G,\ket{1}_G\}$,
both $\mathsf X$ and $\mathsf Y$ introduce a bit flip, while $\mbone$ and $\mathsf Z$ do not.
Hence, a measurement outcome of $s_0\in\{1,2\}$ indicates a bit flip, and $s_0\in\{0,3\}$ signals that a bit flip has not occurred.

Furthermore, consider a qutrit input space $\mathcal{H}_A$, with an orthonormal basis ${\{\ket\perp_A,\ket0_A,\ket1_A\}}$, where $\ket\perp$ is referred to as the ``idle" state.
Let $\Pi_{01}$ denote the projector onto the qubit subspace, i.e.,
$\Pi_{01}=\ketbra0_A+\ketbra1_A$.

The communication channel $\mcN_{SA \to B}$ acts as a selective memory:
\begin{itemize}
\item
If the input state is $\ket\perp$ (idle), then the environment qubit $S$ is forwarded to the output.

\item
If the input state is $\ket0$ or $\ket1$ (rewrite), the input qubit $A$ is transmitted with depolarization noise.
\end{itemize}
The channel is thus specified by 
\begin{align}
    \mcN_{SA \to B}(\rho_{SA}) =  \Tr_A \left[ \left( \id_S \otimes \ketbra{\perp}_A \right)  \rho_{SA} \right] +
      \mathcal{D}_p  \left(  \Pi_{01} \cdot  \Tr_S \rho_{SA}\cdot   \Pi_{01} \right),
\end{align}
where $\Pi_{01}$ projects onto the qubit space.

\begin{figure}[t]
    \centering
    \includegraphics[scale=0.13]{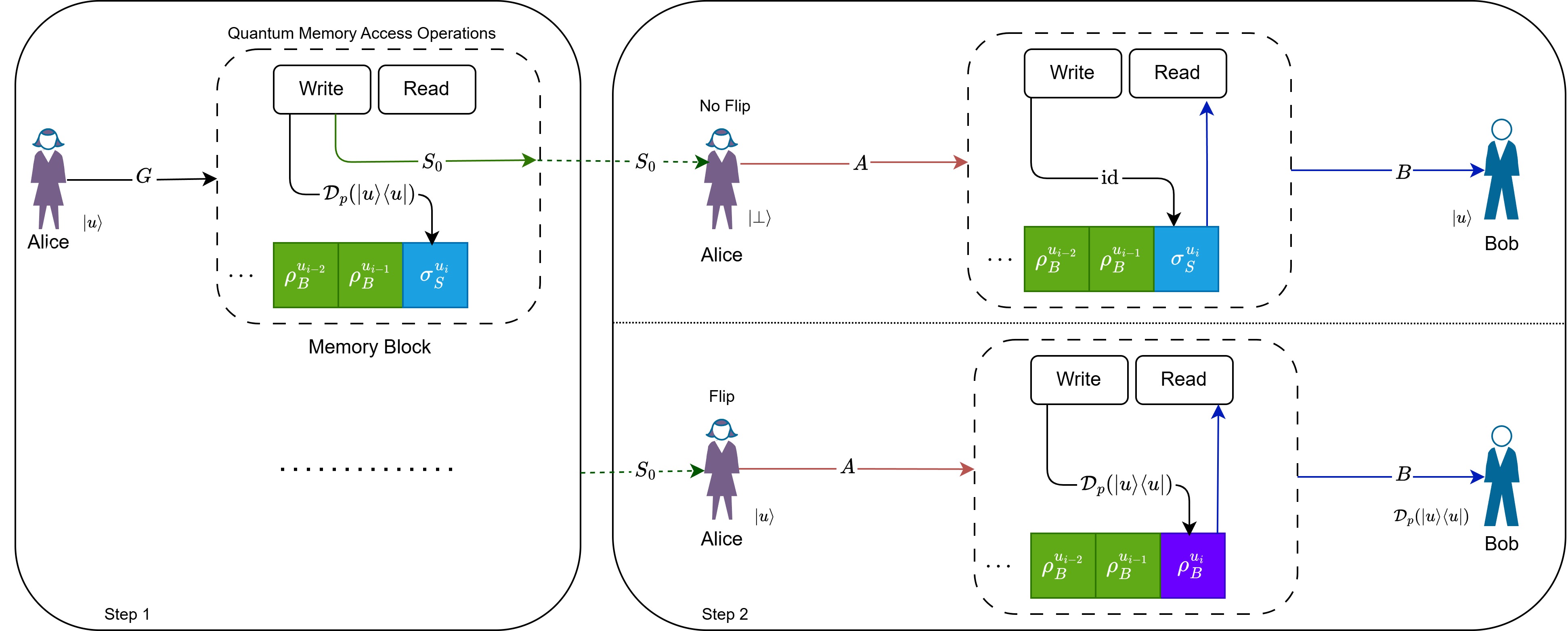}
    \caption{Illustration of the selective rewrite strategy for the depolarizing memory channel.
    \emph{Left:} Alice's first step: she encodes her classical bit $u$ as $\ket{u}$ and writes it into memory, where it is subject to depolarization noise $\mathcal{D}_p$ as described in the example.
    \emph{Top right:} No bit-flip detected ($s_0\in\{0,3\}$), hence Alice sends the idle state $\ket{\perp}$, forwarding the already-correct memory content to Bob.
    \emph{Bottom right:} Bit-flip detected ($s_0\in\{1,2\}$), hence Alice rewrites $\ket{u}$, correcting the error before Bob reads from memory.}
    \label{fig:qmemory}
\end{figure}

In the two-stage rewrite setting, the encoder first writes to the memory (first pass) and observes a noisy version of the resulting output, which reveals where bit flips have occurred. In the second pass, it can rewrite selected positions based on this knowledge.

%
%
The optimal strategy balances these effects, leveraging the rewrite operation not merely as an error-correction mechanism but as part of the signaling scheme, a benefit that is specific to the non-causal case where such global, sequence-level coordination is possible.

We use the capacity notation in Table~\ref{Table:Capacity}. Detailed derivations for the scenarios below are provided in Appendix~\ref{appendix:example_derivations}.
\subsection{Without CSI}
First, consider the case where Alice has no side information.
In the absence of CSI, the model reduces to
depolarization without rewrite, as the writer (Alice) would not gain from using the idle state.
In other words, we effectively have
 a depolarizing channel $\mathcal{D}_p$ without environment dependence, $\tilde{\mathcal N}_{A\to B}=
\mathcal{D}_p$. 
Thus,
the capacity without CSI is given by \cite{King2003}  \cite[Sec. 20.4.4]{wilde2017}:
\begin{align}
    \capc(\tilde{\mathcal N})=\chi(\tilde{\mathcal{N}})= 1 - h_2(\delta) 
    \label{Equation:Rewrtie_Without_CSI}
\end{align}
where $h_2(x)=-(1-x)\log(1-x)-x\log x$ is the binary entropy function, 
and
$\delta = \frac{2p}{3}$.

\subsection{Causal CSI}
\label{subsection:example_causal}
We now consider the quantum action-dependent channel with causal
CSI at the transmitter, as in Theorem~\ref{Theorem:causal}.
Consider the rate formula in \eqref{Equation:Causal_Lower_Bound}.
In the selective rewrite case, 
one may think of $U\in\{0,1\}$ as the classical bit Alice wants to store.
We set the action ensemble as $\{ \sigma_G^u=\ketbra u_G\}$, 
with $U \sim \text{Bernoulli}\!\left(\frac{1}{2}\right)$. 
%
%
The action isometry $T_{G \to SS_0}$ thus produces the bipartite state below (derivation in Appendix~\ref{appendix:example_derivations}):
\begin{align}
    \ket{\sigma_{SS_0}^u} &=T_{G \to SS_0}\ket{\sigma_G^u}
    \nonumber \\
    &= \sqrt{1-\frac{2p}{3}}\ket{u}_{S} \otimes \ket{\theta_{03}^u}_{S_0} + \sqrt{\frac{2p}{3}}\ket{\bar{u}}_S \otimes \ket{\theta_{12}^u}_{S_0}
    \label{eq:sigma_SS0_causal_example}
\end{align}
where $\bar{u}=1-u$ denotes the flipped bit, $\ket{\theta_{03}^u}=\sqrt{\frac{1}{1-2p/3}}\left(\sqrt{1-p}\ket{0} + (-1)^u\sqrt{\frac{p}{3}}\ket{3}\right)$
and $\ket{\theta_{12}^u}=\frac{1}{\sqrt2}\left(\ket{1} +i(-1)^u\ket{2} \right) $.
Note that the states $\ket{\theta_{03}^u}$ and $\ket{\theta_{12}^u}$ are orthogonal.

 We specify a partial isometry $F^u_{S_0 \to A}$ for the transmission encoder: 
\begin{align}
F^u_{S_0 \to A}=\ketbra{\perp}{\theta_{03}^u}+\ketbra{u}{\theta_{12}^u}.
\end{align}
In particular,
if the CSI state is $\ket{\theta_{03}^u}_{S_0}$
(the CSI matches the action), 
transmit the idle state $\ket{\perp}_A$. 
If the CSI state is $\ket{\theta_{12}^u}_{S_0}$
(a bit-flip has occurred), 
rewrite the action state $\ket{u}_A$.
%
Figure~\ref{fig:qmemory} illustrates this encoding strategy.
%

As derived in Appendix~\ref{appendix:example_derivations}, the output state is $(1-\delta^2)\ketbra{u} + \delta^2\ketbra{\bar{u}}$ where $\delta = \frac{2p}{3}$.
This yields the following achievable rate for the selective-rewrite channel $\mathcal{N}_{SA\to B}\circ T_{G\to SS_0}$ with action dependence and causal 
CSI, 
%
\begin{align}
    \capc_{\text{caus}}(\mcN \circ T)&\geq\rqad_{\text{caus}}(\mcN \circ T)
    \nonumber\\
    &\geq I(U;B)_\rho
    \nonumber\\
    &= 1 - h_2(\delta^2) 
    \nonumber\\
    &= 1 - h_2\!\left(\frac{4p^2}{9}\right).
    \label{Equation:Rewrtie_Causal_CSI}
\end{align}
The effective crossover probability $\delta^2$ reflects two consecutive depolarization errors, first a bit-flip in the action channel, followed by an error upon rewrite. As can be seen in Figure~\ref{fig:noiseless_comparison}, the achievable rate with CSI is significantly larger than 
the bound without CSI (cf. \eqref{Equation:Rewrtie_Without_CSI} and \eqref{Equation:Rewrtie_Causal_CSI}). 
Notice that even when the channel is completely depolarizing, for $p=\frac34$, Alice can use the CSI and the rewrite strategy to send information, hence 
$\capc_{\text{caus}}(\mcN \circ T)\geq 1 - h_2\!\left(\frac{1}{4}\right)>0$ with CSI, while $\capc(\tilde \mcN)=0$ without CSI. 

\begin{remark}[Without action dependence]\label{remark:0C_comparison}
We note that the action dependence is critical. 
%
For example, suppose 
$\ket{\sigma_{SS_0}} = \frac{1}{\sqrt2}(\ket{0}\otimes \ket{0}+\ket{1}\otimes \ket{1})$.
Then, using the idle state $\ket\perp$ only inserts stronger depolarization, and then the strategy above does not improve the performance, 
as in the case without CSI. See the blue line in Figure~\ref{fig:noiseless_comparison}.
%
\end{remark}

\begin{figure}[t]
    \centering
    \includegraphics[scale=0.5]{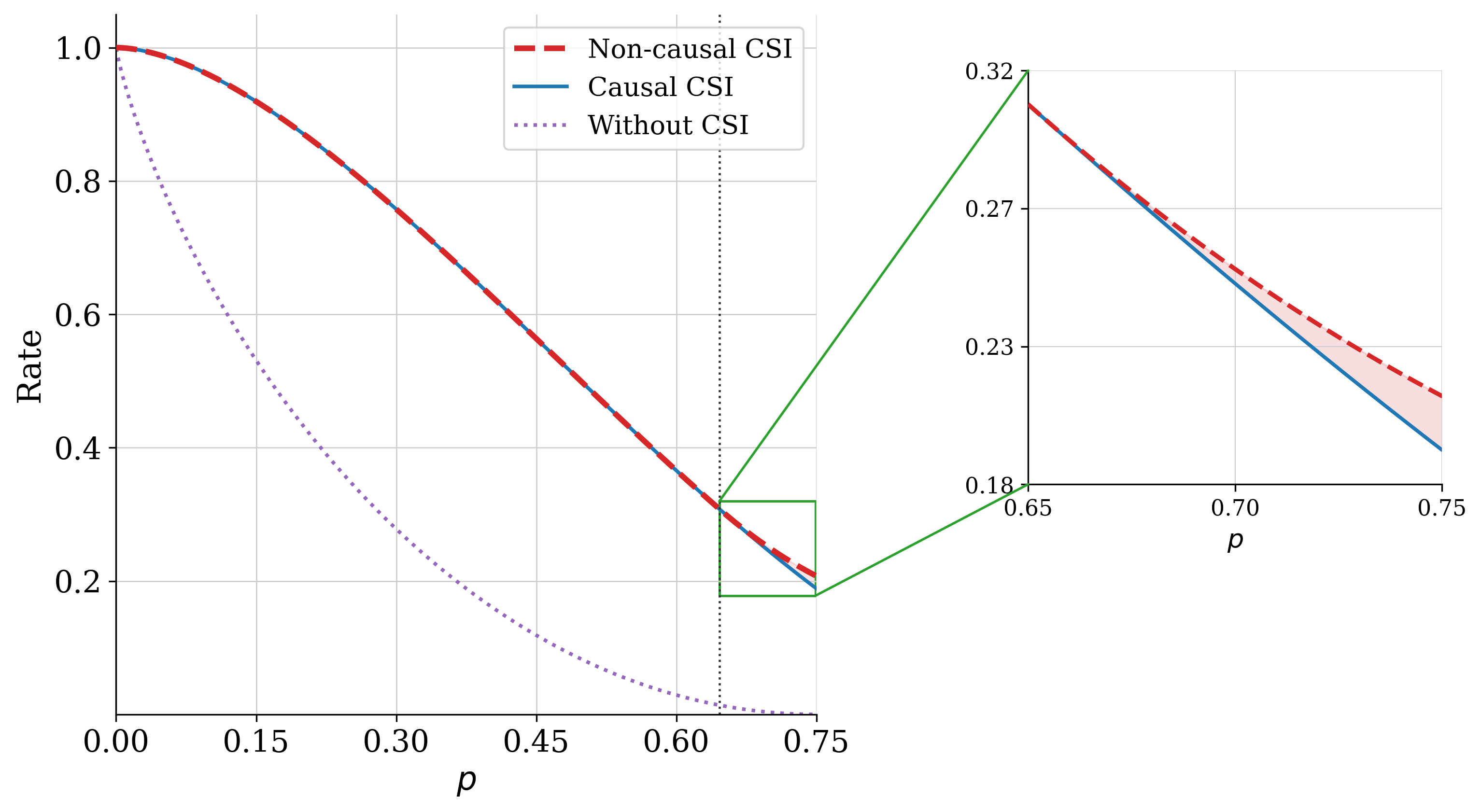}
    \caption{Achievable rates for the selective rewrite channel with action dependence,
    as a function of the depolarization parameter $p \in [0,\frac{3}{4}]$.
    The dashed red line shows the non-causal achievable rate $\rqad_{\text{n-c}}(\mcN \circ T)$,
    the solid blue line shows the causal achievable rate $\rqad_{\text{caus}}(\mcN \circ T)$,
    and the dotted purple line shows the Holevo capacity without CSI~\eqref{Equation:Rewrtie_Without_CSI}.
    The inset shows a zoomed view near $p = \frac{3}{4}$, highlighting the gap between the non-causal and causal rates.
    }
    \label{fig:noiseless_comparison}
\end{figure}

\subsection{Non-Causal CSI}
\label{subsection:example_noncausal}
In the non-causal case, the encoder sees the entire first-pass output sequence before deciding how to rewrite, which enables a more refined strategy than simply correcting all detected errors.
Instead of rewriting every flipped bit, the encoder can selectively rewrite only a subset of them, using the pattern of rewrites itself as an additional degree of freedom to convey information.
This creates a tradeoff: rewriting more positions reduces noise and improves reliability, but rewriting fewer preserves randomness that can be exploited to encode extra bits of information through the choice of which locations are rewritten.

For the case of non-causal CSI at the transmitter, as in Theorem~\ref{Theorem:non-causal}, we
consider the rate formula in~\eqref{eq: R-nonc}.
We use the same channel model, action ensemble $\{\sigma^u_G = \ketbra{u}_G\}$ with
$U \sim \mathrm{Bernoulli}(1/2)$, and bipartite state $\ket{\sigma^u_{SS_0}}$ as
in~\eqref{eq:sigma_SS0_causal_example}.
As before, Alice performs a projective measurement on $S_0$ with projectors
$\{\ketbra{\theta^u_{03}}, \ketbra{\theta^u_{12}}\}$, obtaining the outcome associated
with $\ket{\theta^u_{12}}$ (bit-flip) with probability $\delta$, and $\ket{\theta^u_{03}}$
(no bit-flip) with probability $1-\delta$.

We introduce a parameter $\alpha \in [0,1]$ and set $V \in \{0,1\}$ as follows.
Based on the measurement outcome, she sets $V=1$ with probability $\alpha$ if a bit-flip is detected,
and $V=0$ otherwise. The resulting conditional probabilities are
\begin{align}
    p_{V|U}(0|u) = 1 - \delta\alpha, \qquad p_{V|U}(1|u) = \delta\alpha.
\end{align}
%
Intuitively, $\alpha$ is the fraction of bit flips that are \emph{not} rewritten.

The states $\rho^{u}_{VSA}$ are defined as follows.
If no bit-flip occurred, then
 Alice sends the idle state $\ket{\perp}_A$ and sets $V=0$.
 Otherwise, if a bit-flip was detected, then she randomly chooses whether to perform a rewrite or not. Specifically, she sends the idle state $\ket{\perp}_A$ with probability $\alpha$
 ($V=1$), and sends a rewrite $\ket{u}_A$ with probability $1-\alpha$
 ($V=0$).
Overall, we have
\begin{align}
    \rho_{SVA}^{u} &= (1-\delta)   \ketbra{u}_S\otimes\ketbra{0}_V \otimes\ketbra{\perp}_A + 
    \delta  \ketbra{\bar u}_S\otimes (\alpha \ketbra{1}_V \otimes \ketbra{\perp}_A + (1-\alpha) \ketbra{0}_V \otimes \ketbra{u}_A)
\end{align}
Equivalently,
\begin{align}
   \rho_{VSA}^{u} &=\sum_{v\in\{0,1\}} p_{V|U}(v|u) \ketbra{v}_V\otimes \rho_{SA}^{v,u}
\end{align}
where
\begin{align}
    \rho^{0,u}_{SA}
    &= \frac{(1-\delta)\ketbra{u}_S\otimes\ketbra{\perp}_A
          + \delta(1-\alpha)\ketbra{\bar{u}}_S\otimes\ketbra{u}_A}
         {1-\delta\alpha}.
    \label{eq:rho_0u_nc}
\intertext{and}
    \rho^{1,u}_{SA} &= \ketbra{\bar{u}}_S\otimes\ketbra{\perp}_A.
    \label{eq:rho_1u_nc}
\end{align}

Optimizing over $\alpha \in [0,1]$, we obtain the following achievable rate with action dependence and non-causal CSI:
\begin{align}
    \capc_{\text{n-c}}(\mcN \circ T)
    &\geq \rqad_{\text{n-c}}(\mcN \circ T) \nonumber \\
    &\geq \max_{0 \leq \alpha \leq 1}
    \left\{
        1 - h_2(\delta)
        + (1-\delta\alpha)\!\left[
            h_2\!\left(\frac{\delta(1-\alpha)}{1-\delta\alpha}\right)
            - h_2\!\left(\frac{\delta^2(1-\alpha)}{1-\delta\alpha}\right)
        \right]
    \right\}
    \label{eq:nc_rate_example}
\end{align}
as derived in Appendix~\ref{appendix:example_derivations}.
At $\alpha = 0$, all bit flips are rewritten. The rate then reduces to
$1 - h_2(\delta^2)$, recovering the causal lower bound in~\eqref{Equation:Rewrtie_Causal_CSI}.
For $\alpha >0$, the rate in~\eqref{eq:nc_rate_example} can be larger, 
demonstrating 
an improvement
upon our lower bound for the causal CSI case.
As illustrated in Figure~\ref{fig:noiseless_comparison}, the gap between the causal and non-causal bounds increases with the depolarization parameter~$p$.
At the boundary point $p = \frac{3}{4}$, where $\delta = \frac{1}{2}$,
the optimal rewrite parameter is $\alpha^* = \frac{1}{3}$.
The non-causal achievable rate then evaluates to
$
    \rqad_{\text{n-c}}(\mcN \circ T)
    \geq \frac{5}{6}\!\left[h_2\!\left(\frac{2}{5}\right) - h_2\!\left(\frac{1}{5}\right)\right] 
    \approx 0.208 ,
$
while the causal bound~\eqref{Equation:Rewrtie_Causal_CSI} at the same point gives $1 - h_2\!\left(\frac{1}{4}\right) \approx 0.189$, a gap of approximately $0.019$ bits per channel use.

\section{One-Shot Code Construction} 
\label{section:Coding}
In this section, we introduce a coding scheme for the one-shot setting, of $n=1$, over the quantum action-dependent channel with non-causal side information (see Figure~\ref{fig:environment-dependent-channel}). 
Let $\mcN_{SA \to B}
$ be a quantum action-dependent channel. 
Let $T_{G \to SS_0}
$ be the Stinespring dilation action channel $\mcT_{G \to S_0}$, that generates Alice's side information subsystem $S_0$ and the environment subsystem $S$.
Consider the quantum states defined in Theorem~\ref{Theorem:non-causal}:
 $\ket{\sigma_{SS_0}^u}$ is the pure state produced by  $T_{G \to SS_0}$, hence, 
 $\rho_{VUSA}=\sum_v\sum_u p_{VU}(v,u)\ketbra{v}_V\otimes \ketbra{u}_U\otimes\rho_{SA}^{v,u}$ is the channel input, and the corresponding output is $\rho_{VUB}=  \mathrm{id}_{UV}\otimes\mathcal{N}_{SA\to B}(\rho_{UVSA})$. 

We now describe the one-shot coding scheme. 

\subsection{Codebook}
\label{subsection:noncausal_codebook}
Let $R,\;R_{S}>0$ denote the coding rates corresponding to the information and action encodings.
First, the action-encoding codebook 
$\mcC_U := \{u(m)\}_{m\in \{1,\ldots,2^{R}\}}$ is sampled from a random set of i.i.d. codebooks, 
 distributed according to $p_{U}$.
Then, let $\{\mcC_V(m)\}_{m \in \{1,\ldots,2^R\}}$ be $2^R$ subcodebooks, such that $\mcC_V(m) := \{v(m,1), v(m,2), \ldots, v(m,2^{R_S})\}$, and $\{v(m,\ell)\}_{\ell \in\{1, \ldots, 2^{R_S} \}}$ are drawn independently according to $p_{V|U}(\cdot |u(m))$. 
Both are revealed to Alice and Bob. The overall codebook is  $\mcC := \mcC_U \cup \{\mcC_V(m)\}$.
We use the notation $\mcC$ for a deterministic codebook and $\mathbf{C}$ for a random codebook. 

\subsection{Encoder}
Our encoding scheme consists of two parts, action encoding and message encoding:
  \subsubsection{Action Encoding}
 For each value $u$, let  $\ket{\sigma_{G}^{u}}$ be a pure state on $\mcH_G$.
  Given a message $m$, Alice prepares $\ket{\sigma_{G}^{u(m)}}$, and transmits $G$ through the action channel $\mcT_{G \to S_0}$.

%
 Consider a Stinespring dilation of 
  the action channel, with an isometry $T_{G \to S S_0}$,  where $S$ is the channel environment and $S_0$ is Alice's side information. 
  Upon the action encoding above,  this channel acts on $\ket{\sigma_{G}^{u(m)}}$ to produce the joint state $\ket{\sigma^{u(m)}_{S S_0}}$:
  \begin{align}
    \ket{\sigma_{S S_0}^{u(m)}} &=   T_{G\to S S_0 } \ket{\sigma_{G}^{u(m)}},
  \end{align}
  such that
  \begin{align}
      \sigma_S^{u(m)} = \Tr_{S_0}\ketbra{\sigma_{S S_0}^{u(m)}}{\sigma_{S S_0}^{u(m)}}.
  \end{align}

  \subsubsection{Message Encoding}
  Suppose Alice would like to encode message $m$ and subcodebook index $\ell$.
  For each index pair $(v(m,\ell), u(m))$,  let $\ket{\rho_{SAT}^{v(m,\ell),u(m)}}$ be a purification of $\rho_{SA}^{v(m,\ell),u(m)}$, i.e.,
  \begin{align}
      \rho_{SA}^{v(m,\ell),u(m)} = \Tr_T \ketbra{\rho_{SAT}^{v(m,\ell),u(m)}}
  \end{align}
  where $T$ is an auxiliary purifying register.
  
  Alice prepares the superposition state
\begin{align}
  \ket{\phi_{SATL}^{m}} = \frac{1}{\sqrt{2^{R_S}} } \sum_{\ell=1}^{2^{R_S}} \ket{\rho_{SAT}^{v(m,\ell),u(m)}} \otimes \ket{\ell},
  \label{Equation:Lsuperposition}
\end{align}
hence
\begin{align}
    \Tr_{TL} \ketbra{\phi_{SATL}^{m}}{\phi_{SATL}^{m}}  
     = \frac{1}{2^{R_S}} \sum_{\ell = 1}^{2^{R_S}} \rho_{SA}^{v(m,\ell),u(m)}.
\end{align}
The corresponding output state is
\begin{align}
\hat{\Theta}_B(m)=\frac{1}{2^{R_S}} \sum_{\ell = 1}^{2^{R_S}} \rho_{B}^{v(m,\ell),u(m)}.
\end{align}

According to {Uhlmann’s Theorem} \cite{tomamichel2015quantum}, for every pair of 
    purifications
     $\ket{\psi}_{AB}$ and $\ket{\phi}_{AC}$ of $\rho_A$ and $\sigma_A$, respectively, there exists an isometry $W_{C \to B}$ such that
     $F(\rho_A, \sigma_A) = F(\ketbra{\psi}{\psi}_{AB},  W(\ketbra{\phi}{\phi}_{AC}) W^{\dag})$.
Then, in our case, 
it follows that there exists a set of isometries,
\begin{align}
  \{W_{S_0 \rightarrow ATL}^{m}\}_{m \in \{1,\ldots,2^{R}\}} \in \mathscr{L}\left(\mcH_{S_0} \to \mcH_A \otimes \mcH_{T}\otimes \mcH_{L} \right)
\end{align} 
such that 
$F(\phi_{S}^{m}, \sigma_{S }^{u(m)}) = F\left(\ketbra{\phi_{SATL}^{m}},  W_{S_0 \rightarrow ATL}^{m}\ketbra{\sigma_{S S_0}^{u(m)}} (W_{S_0 \rightarrow ATL}^{m})^{\dag}\right)$.
Similarly, the purified distance satisfies 
\begin{align}
  P\left( \ketbra{\phi_{SATL}^{m}}{\phi_{SATL}^{m}} ,  W^{m} (\sigma^{u(m)}_{SS_0})  W^{m \dag}  \right) 
  &= P\left(\frac{1}{2^{R_S}} \sum_{\ell =1 }^{2^{R_S}} \rho_{S}^{v(m,\ell),u(m)},  \sigma_{S}^{u(m)} \right)
    \label{eq:enc_purd}. 
\end{align}

The message encoder is defined such that
given a message $m$, Alice applies the isometry $W_{S_0 \rightarrow ATL}^{m}$ on $S_0$, 
and transmits $A$ over the action-dependent channel.

\begin{remark}  
Alice never needs to physically
``choose'' an index $\ell$. Rather, she prepares the 
coherent state $\ket{\phi^m_{SATL}}$ in \eqref{Equation:Lsuperposition}, and then applies an isometry $W^m_{S_0\to ATL}$.
Thus, the encoding procedure is fully specified and requires no random selection of a subcodeword.
Tracing out $L$ yields  the average over the sub-codebook $\{v(m,\ell)\}_{\ell}$. 
This is not an accidental consequence of the construction, but rather the intended purpose. 
The register $L$ serves as a purification, allowing us to invoke Uhlmann's theorem and replace the conceptual description ``Alice randomizes over $2^{R_S}$
 subcodewords" by a single, physically implementable encoding isometry acting on the pure state
 $\sigma_{SS_0}^{u(m)}$.
 \end{remark}

\subsection{Decoding}
  \label{Subsection:Decoding}
  


We would like to devise a POVM measurement $\{D_m\}$ on the system $B$ that
distinguishes between the states $\{\rho_{B}^{(m)}\}_{m\in \{1,\ldots,2^{nR}\}}$ with high probability. Let $\mcE_1$ 
be the pinching map associated with $\rho_{VU} \otimes \rho_B $ such that:
  $\rho_{VU} \otimes \rho_B = \sum_{\lambda} \lambda \Pi_{\lambda}$
is the spectral decomposition of $\rho_{VU} \otimes \rho_B$. The pinching map $\mcE_1$ is defined as:
$\mcE_1(\rho) = \sum_{\lambda} \Pi_{\lambda} \rho \Pi_{\lambda},$
where $\{\Pi_{\lambda}\}$ are the orthogonal projectors onto the eigenspace of $\rho_{VU} \otimes \rho_B$.
We define $\nu_1$ as the number of distinct nonnegative eigenvalues of $\rho_{VU} \otimes \rho_B$.
Now ${\mcE_1(\rho_{VUB})}$ is block-diagonal in the eigenbasis of $\rho_{VU} \otimes \rho_B$, thus it commutes with $\rho_{VU} \otimes \rho_B$.
The pinching of $\rho_{VUB}$ with respect to $\rho_{VU} \otimes \rho_B$ is defined as:
%
\begin{align}
  \mcE_1(\rho_{VUB}) = &\sum_{\lambda = 1}^{\nu_1} \Pi_{\lambda} \left( \sum_{v,u} p_{VU}(v,u) \ketbra{v}{v}_V \otimes \ketbra{u}{u}_U \otimes \rho_{B}^{v,u} \right)  \Pi_{\lambda} 
\end{align}

For two Hermitian matrices $A$ and $B$, we define the projection $\{A \geq B\}$  as 
$\sum_{\lambda \geq 0} P_{\lambda} $ , where the spectral decomposition of $A-B$ is given as
$\sum_{\lambda } \lambda P_{\lambda} $. In this notation, $P_{\lambda}$ is the projection to the eigenspace corresponding
to the eigenvalue $\lambda$. Then, let
\begin{align}
    \Pi_{VUB} = \{\mcE_1 \left(\rho_{VUB} \right) \geq 2^{R + R_S} \rho_{VU} \otimes \rho_B\}.
    \label{eq:Pi_VUB_def}
\end{align}
For every $m \in \{1,\ldots,2^{R}\}, \ell\ \in \{1,\ldots,2^{R_S}\}$, we define:
\begin{align}
    \gamma(m,\ell) = 
    &\Tr_{VU} \left[\Pi_{VUB} \left(\ketbra{v(m,\ell)}{v(m,\ell)} \otimes \ketbra{u(m)}{u(m)} \otimes \mbone_B \right)\right] .
    \label{eq:gamma_def}
\end{align}
Our set of POVM is then normalized as
\begin{align}
    \beta(m,\ell) = \left( \sum_{m',\ell'} \gamma(m',\ell') \right)^{-\frac{1}{2}} \gamma(m,\ell)  \left( \sum_{m',\ell'} \gamma(m',\ell') \right)^{-\frac{1}{2}}.
\end{align}

\subsection{Error Bound}
We are now ready to state our one-shot result. 
\begin{proposition} [One-shot error probability] \label{Theorem:1}
Let $\alpha \in \left(0,\frac{1}{2} \right)$.
    Then, the average error probability is bounded by:
    \begin{align}
       \mathbb{E}_{\mathbf{C}}[ \bar{p}_e^{(1)}] &\leq   12 \cdot \nu_{1}^{\alpha} 2^{\alpha \left[R + R_S - \tilde{D}_{1-\alpha}\left(\rho_{VUB} \| \rho_{VU} \otimes \rho_B \right)\right]} 
        +\frac{2}{\alpha}  \frac{\nu_2^\alpha}{2^{\alpha R_S}}  2^{\alpha \tilde{D}_{1+\alpha}(\rho_{VUS} \| \rho_{V-U-S})}
        \intertext{with}
        &\rho_{V-U-S} = \sum_{u} p_{U}(u) \rho_{V}^u \otimes \ketbra{u}{u} \otimes  \sigma_{S}^u
    \end{align}    
where  $\nu_1$ is the number of distinct eigenvalues of $\rho_{VU} \otimes \rho_B$, and  $\nu_2$ is the maximum number of distinct eigenvalues of $\{\sigma_S^{u(m)}\}_{\forall m \in \{1,\ldots,2^{R}\}}$. 
\end{proposition}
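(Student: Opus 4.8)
The plan is to separate $\mathbb{E}_{\mathbf{C}}[\bar p_e^{(1)}]$ into a decoding part, governed by how sharply the projector $\Pi_{VUB}$ in \eqref{eq:Pi_VUB_def} separates $\rho_{VUB}$ from $\rho_{VU}\otimes\rho_B$, and an encoding part, coming from the Uhlmann approximation, governed by a one-shot covering estimate. Since the sub-index $\ell$ is a dummy register, it is natural to lump it first: let $\Gamma_m=\sum_{\ell}\gamma(m,\ell)$, so that the measurement $\{\beta(m,\ell)\}$ induces, on the message alone, the POVM $\{\Lambda_m\}$ with $\Lambda_m=\big(\sum_{m'}\Gamma_{m'}\big)^{-1/2}\Gamma_m\big(\sum_{m'}\Gamma_{m'}\big)^{-1/2}$, i.e.\ the Hayashi--Nagaoka normalization of $\{\Gamma_m\}$. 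First I would apply the Hayashi--Nagaoka operator inequality $\mbone-\Lambda_m\le 2(\mbone-\Gamma_m)+4\sum_{m'\neq m}\Gamma_{m'}$, trace it against Alice's genuine output state for message $m$, and average over $\mathbf{C}$; this produces a correct-detection term $\mathbb{E}_{\mathbf{C}}[\Tr[(\mbone-\Gamma_m)\rho_B^{(m)}]]$, a spurious term $\mathbb{E}_{\mathbf{C}}\big[\sum_{m'\neq m}\Tr[\Gamma_{m'}\rho_B^{(m)}]\big]$, and (from replacing the genuine output by the ideal superposition state $\phi^{m}$ on which the POVM was built) an Uhlmann cost.

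For the spurious term, for $m'\neq m$ the codeword $u(m')$ and the subcodewords $\{v(m',\ell')\}$ are independent of everything determining $\rho_B^{(m)}$, so the expectation factorizes and each summand becomes $\Tr[\Pi_{VUB}(\rho_{VU}\otimes\rho_B)]$, using $\mathbb{E}[\ketbra{v'}{v'}\otimes\ketbra{u'}{u'}]=\rho_{VU}$ and $\mathbb{E}_{\mathbf{C}}[\rho_B^{(m)}]=\rho_B$. Because $\mcE_1(\rho_{VUB})$ commutes with $\rho_{VU}\otimes\rho_B$, so does $\Pi_{VUB}$, and I would run the standard commuting (``classical'') Chernoff estimate with exponent $\alpha$, then use the pinching inequality \eqref{eq:pinching_inequality} to trade $\mcE_1(\rho_{VUB})$ for $\rho_{VUB}$ at the price of the factor $\nu_1^{\alpha}$, identifying the residual trace with $2^{-\alpha\tilde D_{1-\alpha}(\rho_{VUB}\|\rho_{VU}\otimes\rho_B)}$; summing over the $(2^{R}-1)2^{R_S}$ spurious pairs gives a bound of the form $\nu_1^{\alpha}2^{\alpha[R+R_S-\tilde D_{1-\alpha}(\rho_{VUB}\|\rho_{VU}\otimes\rho_B)]}$. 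The correct-detection term is handled in the same spirit: keeping only the diagonal $\ell'=\ell$ contributions inside $\Tr[\Gamma_m\rho_B^{(m)}]$ (the rest being nonnegative) and averaging over $\mathbf{C}$ reduces it to $1-\Tr[\Pi_{VUB}\rho_{VUB}]=\Tr[(\mbone-\Pi_{VUB})\mcE_1(\rho_{VUB})]$ (again by commutation), which the complementary Chernoff estimate plus \eqref{eq:pinching_inequality} bounds by the same exponential. Collecting the Hayashi--Nagaoka constants $2$ and $4$ with the numerical slack of the Chernoff steps yields the first term of the claim, with the crude constant $12$.

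The encoding part is the gap between Alice's actual transmission $\tilde W^{m}(\psi^{u(m)}_{SS_0K_1K_0})\tilde W^{\dagger m}$ and the ideal state $\phi^{m}$. By \eqref{eq:enc_purd} their purified distance equals $P\big(\tfrac{1}{2^{R_S}}\sum_{\ell}\rho_S^{v(m,\ell),u(m)},\sigma_S^{u(m)}\big)$, and since $0\le\Lambda_m\le\mbone$ this purified distance is all the replacement costs; in expectation I must therefore bound $\mathbb{E}_{\mathbf{C}}\,P\big(\tfrac{1}{2^{R_S}}\sum_{\ell}\rho_S^{v(m,\ell),u(m)},\sigma_S^{u(m)}\big)$. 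Conditioned on $u(m)$ the $v(m,\ell)$ are i.i.d.\ $\sim p_{V|U}(\cdot\mid u(m))$ with mean state $\sigma_S^{u(m)}$, so this is exactly a one-shot soft-covering (convex-split) quantity. I would invoke the covering lemma proved in Appendix~\ref{appendix:proof_of_lem1}: a pinching onto the eigenbases of the states $\{\sigma_S^{u(m)}\}$ supplies the factor $\nu_2^{\alpha}$, the conditional i.i.d.\ structure supplies the Markov reference state $\rho_{V-U-S}$, and a R\'enyi-$(1{+}\alpha)$ estimate supplies the factor $2^{\alpha\tilde D_{1+\alpha}(\rho_{VUS}\|\rho_{V-U-S})}$, the whole being at most $\tfrac{2}{\alpha}\,\nu_2^{\alpha}2^{-\alpha R_S}\,2^{\alpha\tilde D_{1+\alpha}(\rho_{VUS}\|\rho_{V-U-S})}$, which is the second term of the claim. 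Finally I would add the three pieces.

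The step I expect to be the main obstacle is the covering estimate: obtaining precisely the divergence $\tilde D_{1+\alpha}(\rho_{VUS}\|\rho_{V-U-S})$ together with the pinching factor $\nu_2^{\alpha}$ requires a careful one-shot argument, and the two layers of randomness — first $u(m)\sim p_U$, then $v(m,\ell)\sim p_{V|U}(\cdot\mid u(m))$ — must be threaded through it so that the Markov structure $V{-}U{-}S$ lands in the reference state and nowhere else. A secondary difficulty is the bookkeeping of the Hayashi--Nagaoka step: the $\ell$-register has to be lumped into $\Gamma_m$ before normalizing, so that the correct-detection term does not acquire a spurious $2^{R_S}$ factor, and one must also argue (e.g.\ by truncation, or by controlling over $\mathbf{C}$ the probability of the bad event $\Gamma_m\not\le\mbone$) that the clean form of the Hayashi--Nagaoka inequality is available.
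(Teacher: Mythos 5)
Your outline identifies the right ingredients (the pinching projector $\Pi_{VUB}$, Hayashi--Nagaoka, and a R\'enyi covering estimate), but the way you stitch them together has two steps that would not go through and that the paper orders differently precisely to avoid them.

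First, you lump the sub-index before normalizing, setting $\Gamma_m=\sum_{\ell}\gamma(m,\ell)$ and invoking Hayashi--Nagaoka with $S=\Gamma_m$, $T=\sum_{m'\neq m}\Gamma_{m'}$. But each $\gamma(m,\ell)\le\mbone_B$ individually, so $\Gamma_m$ is only bounded by $2^{R_S}\mbone_B$; the hypothesis $0\le S\le\mbone$ of the Hayashi--Nagaoka operator inequality fails, and you offer no truncation or concentration argument to restore it (you flag this as a ``secondary difficulty,'' but it is actually the step on which the scheme breaks). The paper never lumps: it bounds the confusion sum by $\sum_{m'\neq 1,\ell}\beta(m',\ell)\le\mbone-\beta(1,1)$ and applies Hayashi--Nagaoka with $S=\gamma(1,1)$ alone, which \emph{is} $\le\mbone_B$, and $T=\sum_{(m',\ell)\neq(1,1)}\gamma(m',\ell)$; the cross terms $(1,\ell)$ with $\ell\neq 1$ are simply absorbed into $T$ and handled by the expectation computation.

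Second, the ordering matters for the exponent. You run Hayashi--Nagaoka against the \emph{actual} output $\Theta_B(m)$ and then pay an Uhlmann replacement cost via $0\le\Lambda_m\le\mbone$, which by the standard operator-norm/trace bound is linear in the purified distance, so you end up needing $\mathbb{E}_{\mathbf{C}}\,P\bigl(\tau_{S|\mcC_m},\sigma_S^{u(m)}\bigr)$. The covering estimate in Appendix~\ref{appendix:proof_of_lem1}, however, naturally controls $\mathbb{E}_{\mathbf{C}}\bigl[\tilde D_{1+\alpha}\bigr]$, hence $\mathbb{E}_{\mathbf{C}}\bigl[P^2\bigr]$ via $1-F^2\le(\ln 2)\tilde D_{1+\alpha}$; upgrading that to $\mathbb{E}_{\mathbf{C}}[P]$ costs a Cauchy--Schwarz square root and would halve the exponent, so you would not recover the stated second term $\tfrac{2}{\alpha}\nu_2^{\alpha}2^{-\alpha R_S}2^{\alpha\tilde D_{1+\alpha}(\rho_{VUS}\|\rho_{V-U-S})}$. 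The paper performs the replacement \emph{before} Hayashi--Nagaoka, using the elementary inequality $z\le 2w+2\abs{\sqrt{z}-\sqrt{w}}^2$ together with $\abs{\sqrt{\Tr[\Delta\sigma]}-\sqrt{\Tr[\Delta\rho]}}\le P(\sigma,\rho)$, so that the purified distance enters \emph{squared}; this is essential to match the claimed exponent. The remaining pieces you describe — the pinching inequality supplying $\nu_1^{\alpha}$, the identification of the residual trace with $2^{-\alpha\tilde D_{1-\alpha}}$, and the Markov reference $\rho_{V-U-S}$ appearing in the covering bound — do line up with Lemmas~\ref{lem:2} and~\ref{lem:1}.
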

The proof of Proposition~\ref{Theorem:1} is given in Appendix~\ref{appendix:Th1}.  The outline is given below.

\subsection{Proof Outline}
\label{Subsection:Outline}
First, we show that 
    \begin{align}
        \mathbb{E}_{\mbfC}[\bar{p}_e^{(1)}]  
        &{\leq} 2\mbE_{\mbfC} \left\{\Tr\left[\left(\sum_{m' \neq 1, \ell} \beta(m',\ell)\right)  \hat \Theta_B(1)\right] \right\}
        + 2\mbE_{\mbfC} \left\{P\left(\Theta_B(1),  \hat \Theta_B(1) \right)^2 \right\}
        \label{Eq:Error_Outline}
    \end{align}
where
 $\Theta_B(m)$ Bob's output state 
    given that the message $m$ was sent, 
    and
$
         \hat \Theta_B(m) = \frac{1}{2^{R_S}}\sum_{\ell = 1}^{2^{R_S}}  \rho_{B}^{v(m,\ell),u(m)}
$
is the average output state,  when averaged over the subcodebook. 

Intuitively, bounding the second error term requires showing the codebook average state $\Theta_B(1)$ is close to the probabilistic average state $\hat \Theta_B(1)$.
To this end, we use the subcodebook property below: 
    \begin{align}
        \mathbb{E}_{\mbfC}\left\{\tilde{D}_{1+\alpha}\left(\frac{1}{2^{R_S}} \sum_{\ell =1}^{2^{R_S}} \rho_{S}^{v(m,\ell),u(m)}\| \sigma_S^{u(m)} \right)\right\} \leq 
        \frac{1}{\alpha \ln 2}  \frac{\nu_2^{\alpha}}{2^{\alpha R_S}}  2^{\alpha \tilde{D}_{1+\alpha}(\rho_{VUS} \| \rho_{V-U-S})} 
    \end{align}
   where 
    $\rho_{V-U-S} = \sum_{u} p_{U}(u) \rho_{V}^u \otimes \ketbra{u}{u} \otimes  \sigma_{S}^u$ and  $\alpha \in (0,\frac{1}{2})$. 
This property is shown in Appendix~\ref{appendix:proof_of_lem1}.

As for the first error term on the right-hand side of \eqref{Eq:Error_Outline},  we use the Hayashi-Nagaoka inequality in order to show to bound this error term by 
    \begin{align}
      \mbE_{\mbfC} \left\{\Tr\left[\left(\sum_{m' \neq 1, \ell} \beta(m',\ell)\right)  \hat \Theta_B(1)\right] \right\}
      &\leq  2\mbE_{\mbfC} \left\{\Tr[\left( \mbone - \gamma(1,1) \right)  \rho_B^{v(1,1), u(1)} ]\right\}
      \nonumber\\ &\phantom{\leq}
      + 4 \sum_{(m',\ell) \neq (1,1)} \mbE_{\mbfC} \left\{\Tr[\left( \gamma(m',\ell) \right)  \rho_B^{v(1,1), u(1)} ]\right\}.
    \end{align}
The bound on the first term on the right-hand side of \eqref{Eq:Error_Outline} is then obtained as a consequence of the projector properties in Appendix~\ref{appendix:proof_of_lem2}:
      \begin{align}
        \Tr[\left( \mbone - \Pi_{VUB}\right) \rho_{VUB}] 
        \leq &\nu_{1}^{\alpha} 2^{\alpha \left[R + R_S - \tilde{D}_{1-\alpha}\left(\rho_{VUB} \| \rho_{VU} \otimes \rho_B \right)\right]}, 
        \label{eq:first_term_lemma_2_def}  \\ 
         2^{R + R_S} \Tr \left[\Pi_{VUB}\left(\rho_{VU} \otimes \rho_B \right)  \right] 
        \leq & \nu_{1}^{\alpha} 2^{\alpha \left[R + R_S- \tilde{D}_{1-\alpha}\left(\rho_{VUB} \| \rho_{VU} \otimes \rho_B \right)\right]}.  \label{eq:second_term_lemma_2_def}
      \end{align}
    The details are given in Appendix~\ref{appendix:Th1}.
    This concludes the outline of the error analysis. 
%

\section{Proof of Theorem \ref{Theorem:non-causal}}
\label{section:proof of main theorem}

\subsection{Achievability Proof}
We consider the average error probability as the number of channel uses $n$ grows to infinity.
Let $\varepsilon>0$.
Let $d_{VUB}$ be the dimension of $\mcH_{V} \otimes \mcH_{U} \otimes \mcH_{B}$, and $d_S$ be the dimension of $\mcH_{S}$. 
As shown in \cite[Lemma~3.9]{Hayashi2017QIT}, we can bound $\nu_1$ and $\nu_2$ as follows:
\begin{align}
  \nu_1 &\leq \left(n+1\right)^{d_{VUB} - 1}, \nonumber\\ 
  \nu_2 &\leq \left(n+1\right)^{d_{S} - 1}.
\end{align}
Based on Proposition \ref{Theorem:1} for $n$ uses of the channel, there exists a (deterministic) codebook $\mcC$ such that:
\begin{align}
  \bar{p}_e^{(n)} 
  &\leq  12 \left(n+1\right)^{\alpha(d_{VUB} - 1)} 2^{\alpha \left[n(R + R_S) - \tilde{D}_{1-\alpha}\left(\rho_{VUB}^{\otimes n} \| \rho_{VU}^{\otimes n} \otimes \rho_B^{\otimes n} \right)\right]}  
  \nonumber\\ 
    &+\frac{2}{\alpha} (n+1)^{\alpha(d_S - 1)}  2^{\alpha \left[ -nR_S + \tilde{D}_{1+\alpha}(\rho_{VUS}^{\otimes n} \| \rho_{V-U-S}^{\otimes n}) \right]}
    \nonumber \\
    &= 2^{-n \left[-\alpha(R + R_S) -\frac{\log(12)}{n} - \frac{\alpha(d_{VUB} -1)}{n}\log(n+1) + \frac{\alpha}{n} \tilde{D}_{1-\alpha}\left(\rho_{VUB}^{\otimes n} \| \rho_{VU}^{\otimes n} \otimes \rho_B^{\otimes n} \right)\right]} 
    \nonumber \\ 
    &+ 2^{-n \left[\alpha R_S -\frac{1}{n}\log(\frac{2}{\alpha}) -\frac{\alpha(d_S -1)}{n}\log(n+1)  - \frac{\alpha}{n}\tilde{D}_{1+\alpha}(\rho_{VUS}^{\otimes n} \| \rho_{V-U-S}^{\otimes n}) \right]}
\end{align}
Hence, the error probability tends to zero as $n\to \infty$, provided that 
\begin{align}
  \alpha(R + R_S) &<
  \frac{\alpha}{n} \tilde{D}_{1-\alpha}\left(\rho_{VUB}^{\otimes n} \| \rho_{VU}^{\otimes n} \otimes \rho_B^{\otimes n} \right)
  - \frac{\alpha(d_{VUB} - 1)}{n}\log \left(n+1\right) - \frac{\log(12)}{n},
  \label{eq:R+Rs_Bound}
\end{align}
 and 
 \begin{align}
   \alpha R_S &>
     \frac{\alpha}{n} \tilde{D}_{1+\alpha}\left(\rho_{VUS}^{\otimes n} \| \rho_{V-U-S}^{\otimes n}\right) 
     + \frac{\alpha(d_{S} - 1)}{n}\log \left(n+1\right) + \frac{1}{n}\log(\frac{2}{\alpha}). 
     \label{eq:Rs_Bound}
\end{align}

In the limit of $n \to \infty$ and $\alpha \to 0$, the bounds from \eqref{eq:R+Rs_Bound} 
and \eqref{eq:Rs_Bound} can be simplified. The last two terms 
vanish as $n \to \infty$. We then apply the additivity of the sandwiched R\'enyi divergence, $\tilde{D}_{\alpha}(\rho^{\otimes n} \| \sigma^{\otimes n}) = n \tilde{D}_{\alpha}(\rho \| \sigma)$, and its convergence to the quantum divergence as $\alpha \to 0$ \cite{tomamichel2015quantum, MuellerLennert2013}:
\begin{align}
    \lim_{\alpha \to 0} \left[ \lim_{n \to \infty} \frac{1}{n}\tilde{D}_{1-\alpha}(\rho_{VUB}^{\otimes n} \| \rho_{VU}^{\otimes n} \otimes \rho_B^{\otimes n}) \right] 
            &= \lim_{\alpha \to 0} \left[ \tilde{D}_{1-\alpha}(\rho_{VUB} \| \rho_{VU} \otimes \rho_B) \right]
            \nonumber \\
            &=D(\rho_{VUB} \| \rho_{VU} \otimes \rho_B)
            \nonumber\\
            &= I(VU;B)_{\rho}, 
            \intertext{and similarly,}
     \lim_{\alpha \to 0} \left[ \lim_{n \to \infty} \frac{1}{n}\tilde{D}_{1+\alpha}(\rho_{VUS}^{\otimes n} \| \rho_{V-U-S}^{\otimes n}) \right] 
     &= I(V;S|U)_{\rho}.  
\end{align}
We deduce that 
the error probability satisfies $\bar{p}_e^{(n)}\leq \varepsilon$  provided that 
%
\begin{align}
  R + R_S &< I(VU;B)_{\rho} - \frac{\delta}{3}, 
  \\
  R_S &> I(V;S|U)_{\rho} + \frac{\delta}{3}
\end{align}
for some $\delta>0$ and for sufficiently large $n$.
%
Hence, a transmission rate $R$ such that
\begin{align}
   R
  &< I(VU;B)_{\rho} - I(V;S|U)_{\rho} -\delta,
\end{align}
is achievable.


To show achievability for the regularized rate $\frac{1}{k} \rqad_{\text{n-c}} \left( \mcN^{\otimes k}\circ T^{\otimes k} \right)$, we employ the coding scheme above to the tensor-product action-dependent channel $\mcN_{SA\to B}^{\otimes k}\circ T_{G\to SS_0}^{\otimes k}$.
This completes the achievability proof.

 \subsection{Converse Proof}
We establish the converse part for the regularized capacity formula in Theorem~\ref{Theorem:non-causal}. 
The proof is straightforward and it relies on Fano's inequality and the Holevo bound.
%
Consider a sequence of $(2^{nR},n,\varepsilon_n)$ codes such that $\varepsilon_n$ tends to zero as $n\to \infty$, 
and let $M$ be a uniformly distributed message from $\{1,\ldots,2^{nR}\}$.
As described in Section~\ref{section:ADCoding},
the action encoder maps the message $m$ into a pure 
action state $
\ket{\sigma_{G^n}^{m}}
$. 
As the action state is sent through the action channel $T^{\otimes n}_{G\to SS_0}$, the resulting   
 side-information state is 
 \begin{align}
 \ket{\sigma_{S^n S_0^n}^{m}}=T^{\otimes n}_{G\to SS_0}\ket{\sigma_{G^n}^{m}}.
 \end{align}
 The corresponding classical-quantum state is
\begin{align}
     \sigma_{M S^n S_0^n} = \frac{1}{2^{nR}}\sum_{m} \ketbra{m}{m}_M\otimes \ketbra{\sigma_{S^n S_0^n}^{m}}.
    \label{eq:conv_rhoM}
\end{align}
 
 Next, Alice encodes the message $m$ and the quantum  CSI $S_0^n$ into
a channel input state $\rho_{MSA^n}$. Note that since the encoder does not act on $S^n$, we have $\sigma_{MS^n}=\rho_{MS^n}$. 
The channel output is thus $\rho_{MB^n}=\mathrm{id}_M\otimes\mcN^{\otimes n}_{SA\to B}(\rho_{MS^n A^n})$. 
Bob applies a decoding POVM to obtain an estimate $\hat M$, with $\Pr(\hat M\neq M)\leq \varepsilon_n$.

By Fano's inequality 
and
the data-processing 
inequality,  
\begin{align}
    nR 
    \leq I(M;B^n)_\rho + n\varepsilon_n'.
    \label{eq:conv_holevo}
\end{align}
where $\varepsilon_n'\eqqcolon \frac{1}{n}+\varepsilon_n R$
(see \cite[Th. 10.7.3 and 11.9.4]{wilde2017}).
As we set the auxiliary variables as $U=V=M$, we have 
$I(V;S^n\mid U)_\rho=0$. Hence, we may write
\begin{align}
    R &\leq 
    \frac{1}{n}[ I(VU;B^n)_\rho - I(V;S^n\mid U)_\rho ]+ \varepsilon_n'.
    \label{eq:conv_inUV}
    \\
    &\leq \frac{1}{n}\, \rqad_{\text{n-c}} \left( \mcN^{\otimes n}\circ T^{\otimes n} \right).
    \label{eq:conv_capacity}
\end{align}
This completes the proof of Theorem~\ref{Theorem:non-causal}.
\qed

\section{Proof of Theorem \ref{Theorem:causal}}
\label{section:proof of causal theorem}

We prove achievability by constructing a coding scheme based on quantum Shannon strategies and applying the Quantum Packing Lemma.
First, we describe the coding scheme for the causal setting.
\paragraph*{Codebook Generation}
Let $R > 0$ denote the information rate. 
The action-encoding codebook $\mcC_U = \{u^n(m)\}_{m \in \{1,\ldots, 2^{nR}\}}$ consists of $2^{nR}$ independent action sequences, sampled according to $p_U$.
In addition, we generate $2^{nR}$ independent sequences of strategy indices $\mcC_V = \{v^n(m)\}_{m \in \{1,\ldots, 2^{nR}\}}$, sampled according to $p_V$, 
where each $v_i$ indexes a causal encoding map $\mathcal{F}_{S_0 \rightarrow A}^{v_i}$.
Both codebooks are revealed to Alice and Bob. We define the combined codeword $\tilde{u}^n(m) = (u^n(m), v^n(m))$, where $\tilde{u}_i(m) = (u_i(m), v_i(m))$. 
Since $U$ and $V$ are independent, we have $p_{\tilde{U}} = p_U \cdot p_V$. Henceforth, this combined codeword $\tilde{u}$ serves as the index for both the action state $\sigma_G^{\tilde{u}}$ and 
the encoding map $\mcF_{S_0 \rightarrow A}^{\tilde{u}}$.


\paragraph*{Encoder}
For a given message $m$, at each time $i$, Alice prepares the action state $\ket{\sigma_{G}^{\tilde{u}_i(m)}}$, which induces the entangled state $\ket{\sigma_{SS_0}^{\tilde{u}_i(m)}}$ 
via the Stinespring dilation of the action isometry $T_{G \rightarrow S S_0}$.
She then applies the encoding map 
$\mathcal{F}_{S_0 \rightarrow A}^{\tilde{u}_i(m)}$ on ${S_{0}}_i$.
The channel input is then
\begin{align}
\rho_{S^n A^n}^{(m)}&= \bigotimes_{i=1}^n
\mathrm{id}_{S}\otimes
\mathcal{F}_{S_0 \rightarrow A}^{\tilde{u}_i(m)} \left( \sigma_{SS_0}^{\tilde{u}_i(m)} \right)
= \bigotimes_{i=1}^n
\rho_{SA}^{\tilde{u}_i(m)} 
\end{align}
Thus, the output is 
\begin{align}
\rho_{B^n}^{(m)}=\bigotimes_{i=1}^n
\rho_{B}^{\tilde{u}_i(m)} .
\end{align}

These maps satisfy the causal consistency requirement, where the choice of map is governed by $\tilde{u}^n(m)$ and the input $A_i$ depends only on the causal side-information ${S_0}_1, \dots, {S_0}_i$. Specifically, the induced joint state satisfies the consistency requirements defined in \eqref{eq:causal-encoding}.


\paragraph*{Decoding}
Bob performs a decoding measurement to distinguish between the $2^{nR}$ possible output states $\rho_{B^n}^{(m)}$.
We apply the Quantum Packing Lemma \cite[Lemma~16.3.1]{wilde2017} to construct the decoder. Let $\rho_B^{\otimes n} = \rho_{B^n}$ denote the expected output state, where $\rho_B = \sum_{\tilde u} p_{\tilde U}(\tilde u) \rho_B^{\tilde u}$.
For each message $m  \in \{1,\ldots,2^{nR}\}$, the codeword state is $\rho_B^{(m)} = \rho_B^{\tilde u^n(m)}$.
We define:
\begin{itemize}
    \item Code subspace projector $\Pi^\delta$: the projector onto the $\delta$-typical subspace $T_\delta^{(n)}(\rho_B)$ of $\rho_B^{\otimes n}$, satisfying $\Tr\{\Pi^\delta\} \leq 2^{n(H(B)_\rho + \delta)}$.
    \item Codeword subspace projectors $\{\Pi^\delta_m\}_{m \in \{1,\ldots,2^{nR}\}}$: for each $m$, the projector onto the conditionally typical subspace $T_\delta^{(n)}(\rho_{B^n}^{(m)})$ of $\rho_{B^n}^{(m)}$ given $\tilde{u}^n(m)$, satisfying $\Tr\{\Pi^\delta_m\} \leq 2^{n(H(B|\tilde{U})_\rho + \delta)}$.
\end{itemize}
By standard properties of typical subspaces, the Packing Lemma conditions are satisfied with high probability:
\begin{align}
    \Tr\{\Pi^\delta \rho_{B^n}^{(m)}\} &\geq 1 - \varepsilon_n, \\
    \Tr\{\Pi^\delta_m \rho_{B^n}^{(m)}\} &\geq 1 - \varepsilon_n, \\
    \Tr\{\Pi^\delta_m\} &\leq d \coloneqq 2^{n(H(B|\tilde{U})_\rho + \delta)}, \\
    \Pi^\delta \rho_B^{\otimes n} \Pi^\delta &\leq 2^{-n(H(B)_\rho - \delta)} \Pi^\delta.
\end{align}
Setting $D = 2^{n(H(B)_\rho - \delta)}$, we have $2^{nR} \cdot \frac{d}{D} = 2^{n(R + H(B|\tilde{U})_\rho - H(B)_\rho + 2\delta)} = 2^{-n(I(\tilde{U};B)_\rho - R - 2\delta)}$.
Thus,
by the Quantum Packing Lemma, there exists a POVM $\{D_m\}_{m   \in \{1,\ldots,2^{nR}\}}$ such that the average probability of correct decoding satisfies:
\begin{align}
    \mbE_{\mbfC}\left\{\frac{1}{2^{nR}} \sum_{m \in \{1,\ldots,2^{nR}\}} \Tr\{D_m \rho_{B^n}^{(\tilde{u}^n(m))}\}\right\} \geq 1 - 2(\varepsilon_n + 2\sqrt{\varepsilon_n}) - 4 \cdot 2^{-n(I(\tilde{U};B)_\rho - R - 2\delta)},
\end{align}
where $\varepsilon_n \to 0$ as $n \to \infty$.
Hence, the error probability vanishes as $n \rightarrow \infty$, for any $R < I(\tilde{U}; B)_\rho-2\delta $.

The regularized characterization follows by the same arguments as in 
Section~\ref{section:proof of main theorem}. The details are omitted. 
\qed

\section{Summary and Discussion}
\label{section:summary}

\subsection{Summary}
We study communication over a quantum action-dependent channel, in which the encoder first performs an action $G$,
that influences the channel environment $S$, and subsequently encodes a message into a transmission $A$ sent to the receiver.   
This two-stage interaction models practical scenarios such as memories with defects and magnetic recording with rewriting 
\cite{kittichokechai2012multi}. In the quantum setting, measurements performed on the 
transmission system may induce state collapse in the channel environment, introducing additional 
coupling between encoding and channel behavior.

Our main results are achievable rates for the quantum action-dependent channel under two CSI availability 
assumptions at the encoder. In Theorem~\ref{Theorem:non-causal}, we establish that the rate $\rqad_{\text{n-c}}(\mcN\circ T) = \max [ I(VU;B)_{\rho} - I(V;S|U)_{\rho}]$ is achievable with non-causal CSI, 
where the optimization is over the action ensemble $\{\ket{\sigma_G^u}\}$, a classical auxiliary pair $(V,U)$, and joint input states $\rho_{SA}^{v,u}$ subject
to a partial trace constraint.
For this non-causal setting, we further establish a matching converse, showing that the regularized version of this rate equals the capacity.
In Theorem~\ref{Theorem:causal}, we establish that the rate $\rqad_{\text{caus}}(\mcN\circ T) = \max\, I(U;B)_{\rho}$ is achievable with causal CSI, where Alice applies a quantum Shannon strategy $\{\mathcal{F}^u_{S_0\to A}\}$,
a family of encoding maps each adapted to the action index $u$. Unlike in the non-causal case, no penalty term appears in the causal rate formula, since Alice's transmission strategy is chosen 
independently of the specific induced side-information sequence.
Both results expand previously established frameworks. In the non-causal setting, taking a degenerate action system
removes Alice's influence over the environment state, 
and our rate reduces to the one in  \cite{Dupuis2009, Dupuis2010t, pereg2019entanglement, Pereg2022, pereg2022classical, pereg2021quantum, Anshu2020, zivarifard2024covert}, i.e., the quantum environment-dependent channel without action dependence, 
$\mathsf{R}_{\text{n-c}}(\mathcal{N}\circ\sigma)$ as in \eqref{eq:R_0}. In the causal setting, the achievable rate equals the Holevo information $\chi(\mathcal{M})$ of a 
virtual channel $\mathcal{M}:u\mapsto \mathcal{N}_{SA\to B}(\rho_{SA}^u)$, paralleling Shannon's classical observation for causal CSI \cite{Shannon1958}.

To illustrate these results, we analyze  qubit memory  with depolarization noise and selective rewrite. 
The action map is a 
Stinespring dilation of the 
depolarizing channel, where the side-information system $S_0$ stores a flag indicating which Pauli error occurred. 
Without CSI, the memory capacity is the same as the Holevo information of the depolarizing channel. 
With causal CSI, Alice can detect bit flips before transmission, improving the achievable rate. 
In the non-causal setting, Alice observes the full error pattern before encoding, enabling a selective-rewrite strategy that uses the rewrite pattern itself as an additional signaling 
degree of freedom, yielding  further rate improvement.

\subsection{Open Problems}
Several directions remain open. First, A single-letter capacity formula for a general quantum channel is a major open problem in quantum Shannon theory.  
In the classical setting, Weissman derived a single-letter capacity formula \cite{weissman2010}. Whereas, our capacity characterization has a regualarized multi-letter form (see \eqref{eq:regularized_capacity}).
In the quantum setting, however, the Holevo information is super-additive and does not provide a single-letter characterization of the capacity \cite{Hastings2009}, even in the basic setting of a point-to-point quantum channel,
either with or without  environment dependence \cite{Anshu2020} (see discussion in \cite[Sec. III]{Pereg2022}).
It therefore follows that our capacity formula
requires regularization.
%
%
%
Identifying channel classes for which a single-letter characterization holds, such as entanglement-breaking or classical-quantum channels, is of particular interest.


In this work, we address the transmission of classical information via a quantum action-dependent channel. 
Studying quantum information transmission and characterizing the quantum capacity 
is a natural future direction. 
Finally, in the environment-dependent model, the additional resource of entanglement assistance between Alice and Bob increases the channel's capacity \cite{Dupuis2009,pereg2019entanglement}.
Moreover, recent work by Yao and Jafar \cite{yao2026quantum} has shown that entanglement assistance can strictly improve the capacity of classical channels with causal state information. 
Exploring whether entanglement between Alice and Bob provides a similar benefit in the quantum action-dependent setting is an intriguing direction.

\subsection{Practical Applications}

Quantum action dependence naturally arises in a broad range of physically relevant settings where
controlled interventions influence the effective communication environment.
Such scenarios are expected to become increasingly important in adaptive quantum technologies,
where learning, control, and information transmission occur simultaneously.

One important example is \emph{joint communication and sensing} \cite{Wang2022ITW} with adaptive probing.
In these systems, the transmitter does not merely send information through a fixed physical medium,
but actively interacts with the environment in order to learn and manipulate the channel while communication takes place.
The probing signals used for sensing inevitably affect the underlying quantum system,
thereby modifying the effective channel experienced by subsequent transmissions.
As a result, the communication channel itself becomes action dependent.

This operational structure fits naturally within our framework.
The action $G$ represents the probing strategy selected by the transmitter,
while the environment state $S$ models the channel response generated by the probing interaction.
The acquired side information then enables the transmitter to adapt its encoding strategy to the learned channel conditions.
From this perspective, adaptive sensing and communication are fundamentally coupled processes:
the transmitter must balance the resources devoted to probing the environment against those devoted to reliable information transmission.

This viewpoint differs fundamentally from conventional sensing formulations \cite{liu2024quantum},
where measurement backaction and probing-induced disturbances are typically regarded as unavoidable impairments.
In contrast, the action-dependent framework treats these interactions as a controllable resource.
The probing actions are deliberately designed not only to estimate the channel,
but also to shape the effective communication environment prior to transmission.
This reveals a fundamental tradeoff between channel learning, environment manipulation,
and communication performance.
More broadly, it suggests that adaptive probing can serve simultaneously as a sensing mechanism,
a channel-control mechanism, and an information-bearing signal.

Another important application is \emph{quantum error mitigation} (QEM)
\cite{endo2018practical,cai2023quantum},
which has emerged as a central approach for improving the performance of noisy intermediate-scale quantum (NISQ) devices.
Because fully fault-tolerant quantum computation remains technologically demanding,
current quantum processors operate in regimes where noise and decoherence significantly degrade computational accuracy.
Quantum error mitigation seeks to suppress these effects without requiring the substantial overheads associated with full quantum error correction.

In general, QEM methods operate by characterizing, learning, or reshaping the effective noise process affecting a quantum device,
and then exploiting the acquired information to compensate for the resulting errors.
Representative techniques include noise extrapolation, probabilistic error cancellation,
virtual distillation, and adaptive noise learning \cite{endo2018practical,cai2023quantum}.
A common feature underlying these methods is the use of controlled interventions
to probe and partially control the effective noise channel before the final information-processing task is performed.

This structure aligns naturally with the action-dependent framework studied in this work.
The noise-characterization protocol plays the role of the action stage,
where the chosen action influences the effective environment experienced by the subsequent transmission or computation.
The resulting environment state $S$ represents the learned or induced noise realization,
while the acquired side information enables noise-aware encoding and compensation strategies.
From this perspective, QEM may be viewed as an adaptive communication protocol in which
probing actions are used not only to estimate noise,
but also to actively shape the effective channel prior to information transmission.

More broadly, action-dependent quantum systems are expected to play an increasingly important role
in the development of adaptive quantum communication and computation architectures,
where learning, control, and information transmission are fundamentally intertwined.

\begin{appendices}

\section{Subcodebook Average}
\label{appendix:proof_of_lem1}

We begin our one-shot analysis with the subcodebook average result below (see proof outline in Subsection~\ref{Subsection:Outline}).
\begin{lemma} [see {\cite[Lemma 7]{Anshu2020}}] 
\label{lem:1}
    Let $\rho_{VUS} = \Tr_A\left[\rho_{VUSA}\right]$ be a classical-quantum state, and $\alpha \in (0,\frac{1}{2})$. Furthermore, let $\mcC_m = \{v(m,1), \ldots v({m,2^{R_S}}), u(m)\}$ be a collection of random variables such that the sequence $v(m,1),\ldots,v(m,2^{R_S})$ is conditionally i.i.d.
    $\sim p_{V|U}(\cdot| u(m)) $, for every given $u(m)$.
    We consider the following state:
    \begin{align}
        \tau_{S|\mcC_m} \triangleq \frac{1}{2^{R_S}} \sum_{\ell = 1}^{2^{R_S}} \rho_{S}^{v(m,\ell),u(m)}. \label{eq:tau_S_Cm}
    \end{align}
    %
    Then, 
    there exists a constant $\nu_2 \geq 0$ such that:
    \begin{align}
        \mathbb{E}_{\mbfC}\left\{\tilde{D}_{1+\alpha}(\tau_{S|\mcC_m} \| \sigma_S^{u(m)})\right\} \leq 
        \frac{1}{\alpha \ln 2}  \frac{\nu_2^{\alpha}}{2^{\alpha R_S}}  2^{\alpha \tilde{D}_{1+\alpha}(\rho_{VUS} \| \rho_{V-U-S})} 
    \end{align}
    for all $m\in \{1,\ldots, 2^R\}$,
    where 
    $\rho_{V-U-S} = \sum_{u} p_{U}(u) \rho_{V}^u \otimes \ketbra{u}{u} \otimes  \sigma_{S}^u$, and $\nu_2$ is the maximum number of distinct eigenvalues of the states $\{\sigma_{S|u}\}_u$.
\end{lemma}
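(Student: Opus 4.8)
\textbf{Proof plan for Lemma~\ref{lem:1}.}

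The plan is to bound the expected sandwiched R\'enyi divergence by first peeling off the logarithm. Since $\tilde D_{1+\alpha}(\tau\|\sigma)=\frac{1}{\alpha}\log\Tr\!\big[\sigma^{-\alpha/2(1+\alpha)}\tau\,\sigma^{-\alpha/2(1+\alpha)}\big]^{1+\alpha}$, and $\mathbb E_{\mathbf C}[\tilde D_{1+\alpha}]=\frac{1}{\alpha}\,\mathbb E_{\mathbf C}\log(\cdot)\le \frac{1}{\alpha\ln 2}\log\mathbb E_{\mathbf C}(\cdot)$ by concavity of $\log$, it suffices to control $\mathbb E_{\mathbf C}\Tr\!\big[\sigma^{\beta}\tau_{S|\mathcal C_m}\sigma^{\beta}\big]^{1+\alpha}$ with $\beta=-\alpha/2(1+\alpha)$, where $\sigma=\sigma_S^{u(m)}$ (implicitly conditioning on $u(m)$ throughout, then averaging). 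First I would condition on $u(m)=u$, so that $\tau_{S|\mathcal C_m}=\frac1{2^{R_S}}\sum_\ell \rho_S^{v(m,\ell),u}$ is an empirical average of i.i.d.\ terms with mean $\mathbb E_v[\rho_S^{v,u}]=\sigma_S^{u}$ (this identity follows from \eqref{Equation:phi_SS0}--\eqref{eq:rho_VUSA}, since $\sum_v p_{V|U}(v|u)\rho_{SA}^{v,u}$ has $S$-marginal $\sigma_S^u$). The key device is the pinching map $\mathcal E_{\sigma_S^u}$: by the pinching inequality \eqref{eq:pinching_inequality}, for $x=1+\alpha\in(1,2)$ one has $X^{x}\le \nu_2^{x-1}\,\mathcal E_{\sigma}(X)^{x}$ for any $X\ge0$ commuting-after-pinching argument, but more directly, applying operator convexity of $t\mapsto t^{1+\alpha}$ together with the pinching inequality lets us replace $\Tr[\sigma^\beta\tau\sigma^\beta]^{1+\alpha}$ by $\nu_2^{\alpha}$ times a quantity in which everything commutes with $\sigma$, reducing the problem to an essentially classical computation.

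Once everything commutes with $\sigma_S^u$, I would use the operator inequality for $1+\alpha\le 2$: for a sum of positive operators, $\big(\sum_\ell X_\ell\big)^{1+\alpha}\le \sum_\ell X_\ell^{1+\alpha} + (\text{cross terms})$, or cleaner, split $\tau=\frac1{2^{R_S}}\sum_\ell \rho_S^{v(m,\ell),u}$ and bound
\begin{align}
\Tr\!\big[\sigma^\beta\tau\sigma^\beta\big]^{1+\alpha}
\le \Big(\tfrac{1}{2^{R_S}}\Big)^{1+\alpha}\!\!\sum_\ell \Tr\!\big[\sigma^\beta\rho_S^{v(m,\ell),u}\sigma^\beta\big]^{1+\alpha}
+ \Big(\tfrac{1}{2^{R_S}}\Big)^{1+\alpha}\!\!\sum_{\ell\ne\ell'}(\cdots),
\end{align}
where the diagonal terms contribute the $2^{-\alpha R_S}$ gain (there are $2^{R_S}$ of them against a $2^{-R_S(1+\alpha)}$ prefactor) and evaluate under expectation to $2^{-\alpha R_S}\,\mathbb E_v\Tr[\sigma^\beta\rho_S^{v,u}\sigma^\beta]^{1+\alpha}=2^{-\alpha R_S}2^{\alpha\tilde D_{1+\alpha}(\rho^{v,u}_S\|\sigma^u_S)}$-type expressions, while the off-diagonal cross terms, using independence and $\mathbb E_v\,\sigma^\beta\rho_S^{v,u}\sigma^\beta=\sigma^{1+2\beta}=\sigma^{1/(1+\alpha)}$, collapse to $\Tr[\sigma^{1/(1+\alpha)\cdot(1+\alpha)}]=\Tr[\sigma]=1$, contributing the ``$1$'' that is absorbed into the $\log$. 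Taking the $u$-average and recognizing $\mathbb E_u\,\mathbb E_{v|u}\Tr[(\sigma_S^u)^\beta\rho_S^{v,u}(\sigma_S^u)^\beta]^{1+\alpha}$ as exactly $2^{\alpha\tilde D_{1+\alpha}(\rho_{VUS}\|\rho_{V-U-S})}$ (this is where the classical-quantum structure of $\rho_{VUS}$ and the block form of $\rho_{V-U-S}=\sum_u p_U(u)\rho_V^u\otimes\ketbra{u}\otimes\sigma_S^u$ are used: the $V,U$ registers being classical makes the sandwiched divergence of the joint state split as a $p$-weighted sum over the conditional states) then yields, after the $\frac{1}{\alpha\ln 2}\log$, the claimed bound with $\nu_2^\alpha/2^{\alpha R_S}$ in front.

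I expect the main obstacle to be the non-commutativity bookkeeping: making the pinching step rigorous so that the $\nu_2^\alpha$ factor comes out cleanly, and correctly handling the cross terms in the expansion of $(\sum_\ell\cdot)^{1+\alpha}$ for a matrix-valued, non-commuting sum --- the naive scalar inequality $(\sum a_\ell)^{1+\alpha}\le \sum a_\ell^{1+\alpha}$ fails for $1+\alpha>1$, so I would instead work in the pinched (commuting) picture where the function $t\mapsto t^{1+\alpha}$ can be handled via its integral representation or via the fact that on commuting operators the trace functional reduces to an $\ell_{1+\alpha}$-type bound, and then carefully track that the cross terms, under the i.i.d.\ expectation, reduce to the trace of a power of $\sigma_S^u$ and hence to $1$. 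Since the statement cites \cite[Lemma 7]{Anshu2020}, I anticipate the cleanest route is to invoke that lemma's argument essentially verbatim, with the only new ingredient being that here $\sigma_S^{u}$ and the ensemble $\{\rho_S^{v,u}\}$ additionally depend on the action index $u$ through $\sigma_{SS_0}^u=\mathcal T_{G\to SS_0}(\sigma_G^u)$, which is harmless because the whole estimate is carried out conditionally on $u(m)=u$ before averaging over $u$.
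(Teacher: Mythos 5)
Your skeleton matches the paper's at the top and bottom of the argument: you correctly peel off the logarithm via Jensen to reduce to bounding $\mathbb E_{\mathbf C}\,2^{\alpha\tilde D_{1+\alpha}(\tau_{S|\mathcal C_m}\|\sigma_S^{u(m)})}$, you correctly identify that conditioning on $u(m)$ makes the $v(m,\ell)$'s i.i.d.\ with $\mathbb E_{V|u}[\rho_S^{V,u}]=\sigma_S^u$, you correctly bring in the pinching map $\mathcal E_{\sigma_S^u}$ to obtain a $\nu_2^\alpha$ factor, and your final observation that the classical-quantum block structure makes $\mathbb E_{V,U}\,2^{\alpha\tilde D_{1+\alpha}(\rho_{S|V,U}\|\sigma_{S|U})}=2^{\alpha\tilde D_{1+\alpha}(\rho_{VUS}\|\rho_{V-U-S})}$ is exactly step $(e)$ of the paper's proof.

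However, the middle of your plan has a genuine gap, and it is precisely the step you flag yourself as the ``main obstacle.'' You propose expanding $\big(\frac{1}{2^{R_S}}\sum_\ell X_\ell\big)^{1+\alpha}$ into a ``diagonal'' sum plus ``cross terms.'' No such expansion exists for a non-integer power $1+\alpha\in(1,2)$, and moving to the pinched (commuting) picture does not create one: for commuting nonnegative numbers, $(\sum_\ell a_\ell)^{1+\alpha}\geq\sum_\ell a_\ell^{1+\alpha}$, i.e.\ the naive inequality is \emph{reversed}, and there is no clean residual that one could call a cross term. The paper's actual device is different and crucial: it never raises the whole sum to the power $1+\alpha$. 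Instead it factors the trace as
\begin{align}
\Tr\!\big[\tilde\tau^{1+\alpha}\big]=\Tr\!\big[\tilde\tau\cdot\tilde\tau^{\alpha}\big],\qquad
\tilde\tau:=\sigma^{-\alpha/2(1+\alpha)}\,\tau_{S|\mathcal C_m}\,\sigma^{-\alpha/2(1+\alpha)},
\end{align}
and then linearizes only the first, degree-one factor $\tilde\tau=\frac{1}{2^{R_S}}\sum_\ell\tilde X_\ell$ over $\ell$, while in the second factor $\tilde\tau^{\alpha}$ it splits $\rho_\ell+\sum_{\ell'\neq\ell}\rho_{\ell'}$, adds back $\rho_\ell\geq0$, replaces the independent $\sum_{\ell'}$ by its conditional mean $2^{R_S}\sigma_{S|u}$ via concavity of $t\mapsto t^\alpha$, applies the pinching inequality to the remaining $\rho_\ell$ term, and only then uses commuting subadditivity $(A+B)^\alpha\leq A^\alpha+B^\alpha$ for $\alpha\in(0,1)$ on the now-commuting pair. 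The ``$1$'' in the final bound comes from $\Tr[\sigma^{\alpha-\alpha}\rho_{S|V,U}]=1$ after trace cyclicity, not from cross terms collapsing to $\Tr[\sigma]$. So the specific algebraic move you are missing is the $X^{1+\alpha}=X\cdot X^\alpha$ factorization with asymmetric treatment of the two factors; this is what makes the $\ell$-average tractable for $\alpha\in(0,1)$. Your closing remark that you would fall back on invoking \cite[Lemma 7]{Anshu2020} verbatim is honest, but as written the proposal does not reconstruct that argument.

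Two minor points. First, after pinching you write ``the diagonal terms $\ldots$ evaluate under expectation to $2^{-\alpha R_S}\,2^{\alpha\tilde D_{1+\alpha}(\rho^{v,u}_S\|\sigma^u_S)}$''; in the paper what actually appears is $\Tr\big[\mathcal E_2^{1+\alpha}(\rho_{S|V,U})\,\sigma_{S|U}^{-\alpha}\big]=2^{\alpha\tilde D_{1+\alpha}(\mathcal E_2(\rho_{S|V,U})\|\sigma_{S|U})}$, and the data-processing inequality for $\tilde D_{1+\alpha}$ under the pinching channel is then needed to remove the $\mathcal E_2$; that DPI step is not mentioned in your plan. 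Second, your chosen range $\alpha\in(0,\tfrac12)$ is used (together with $\log_2(1+x)\leq x/\ln 2$) only at the very end to pass from $\frac{1}{\alpha}\log_2(1+\cdot)$ to the stated linear bound; it would be worth stating that explicitly since your plan ends before reaching that point.
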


The lemma can be obtained as a consequence of \cite[Lemma 7]{Anshu2020}. For completeness, we provide the full proof below. 
\begin{proof}
For $\alpha \in (0,1]$ and due to concavity of $\log(\cdot)$, we obtain:
\begin{align}
    \mbE_{\mbfC}\left\{\tilde{D}_{1+\alpha}(\tau_{S|\mcC_m} \| \sigma_S^{u(m)})\right\}
    &= \frac{1}{\alpha}\mbE_{\mbfC}\left\{\alpha \tilde{D}_{1+\alpha}(\tau_{S|\mcC_m} \| \sigma_S^{u(m)})\right\}
    \nonumber\\
    &= \frac{1}{\alpha}\mbE_{\mbfC}\left\{2^{\log_2\left(\alpha \tilde{D}_{1+\alpha}(\tau_{S|\mcC_m} \| \sigma_S^{u(m)})\right)}\right\}
    \nonumber \\
    &\leq \frac{1}{\alpha} \log_2 \left(\mbE_{\mbfC} \left\{ 2^{\alpha \tilde{D}_{1+\alpha}(\tau_{S|\mcC_m} \| \sigma_S^{u(m)})} \right\} \right).
    \label{eq:lem1_proof_part1}
\end{align}

Then, by substituting the definition of $\tilde{D}_{1+\alpha}(\cdot\|\cdot)$: \footnote{To avoid confusion, from now on, we are using the notation $\sigma_{S|u(m)}$ and $\rho_{S|v(m,\ell),u(m)}$ to represent the states $\sigma_S^{u(m)}$ and $\rho_S^{v(m,\ell),u(m)}$, respectively.} 
\begin{align}
    \mbE_{\mbfC} &\left\{ 2^{\alpha \tilde{D}_{1+\alpha}(\tau_{S|\mcC_m} \| \sigma_{S|u(m)})}\right\}
    \nonumber \\
    &= \mbE_{\mbfC} \left\{ \Tr\left[ \left({\sigma_{S|u(m)}}^{\frac{-\alpha}{2(1+\alpha)}} \tau_{S|\mcC_m} {\sigma_{S|u(m)}}^{\frac{-\alpha}{2(1+\alpha)}} \right)^{1+\alpha} \right] \right\}
    \nonumber \\
    &= \mbE_{\mbfC} \left\{ \Tr\left[ \left({\sigma_{S|u(m)}}^{\frac{-\alpha}{2(1+\alpha)}} \left(\frac{1}{2^{R_S}} \sum_{\ell = 1}^{2^{R_S}} \rho_{S|v(m,\ell),u(m)} \right) {\sigma_{S|u(m)}}^{\frac{-\alpha}{2(1+\alpha)}} \right)^{1+\alpha} \right] \right\}
    \nonumber \\
    &= \mbE_{\mbfC} \left\{ \Tr \left[\left({\sigma_{S|u(m)}}^{\frac{-\alpha}{2(1+\alpha)}} \left(\frac{1}{2^{R_S}} \sum_{\ell = 1}^{2^{R_S}} \rho_{S|v(m,\ell),u(m)} \right) {\sigma_{S|u(m)}}^{\frac{-\alpha}{2(1+\alpha)}} \right) \right.\right.
    \nonumber \\
    & \quad\times\left. \left.\left({\sigma_{S|u(m)}}^{\frac{-\alpha}{2(1+\alpha)}} \left(\frac{1}{2^{R_S}} \sum_{\ell' = 1}^{2^{R_S}} \rho_{S|v(m,\ell'),u(m)} \right) {\sigma_{S|u(m)}}^{\frac{-\alpha}{2(1+\alpha)}} \right)^{\alpha} \right] \right\}
    \nonumber \\
     &= \frac{1}{2^{R_S}} \sum_{\ell = 1}^{2^{R_S}} \mbE_{\mbfC} \left\{ \Tr \left[\left({\sigma_{S|u(m)}}^{\frac{-\alpha}{2(1+\alpha)}} \left( \rho_{S|v(m,\ell),u(m)} \right) {\sigma_{S|u(m)}}^{\frac{-\alpha}{2(1+\alpha)}} \right) \right. \right.
    \nonumber \\
    & \quad\times \left.\left.\left({\sigma_{S|u(m)}}^{\frac{-\alpha}{2(1+\alpha)}} \frac{1}{2^{R_S}}\left( \rho_{S|v(m,\ell),u(m)} + \sum_{\ell' \neq \ell} \rho_{S|v(m,\ell'),u(m)} \right) {\sigma_{S|u(m)}}^{\frac{-\alpha}{2(1+\alpha)}} \right)^{\alpha} \right] \right\}
    \nonumber \\
     &\leq \frac{1}{2^{R_S}} \sum_{\ell = 1}^{2^{R_S}} \mbE_{\mbfC} \left\{ \Tr \left[\left({\sigma_{S|u(m)}}^{\frac{-\alpha}{2(1+\alpha)}} \left( \rho_{S|v(m,\ell),u(m)} \right) {\sigma_{S|u(m)}}^{\frac{-\alpha}{2(1+\alpha)}} \right) \right.\right.
    \nonumber \\
    & \quad\times \left.\left.\left({\sigma_{S|u(m)}}^{\frac{-\alpha}{2(1+\alpha)}} \frac{1}{2^{R_S}}\left( \rho_{S|v(m,\ell),u(m)} + \sum_{\ell' = 1}^{2^{R_S}} \rho_{S|v(m,\ell'),u(m)} \right) {\sigma_{S|u(m)}}^{\frac{-\alpha}{2(1+\alpha)}} \right)^{\alpha} \right]\right\},
\end{align}
where the second equality  is obtained by substituting $\tau_{S|\mcC_m}$ as in \eqref{eq:tau_S_Cm}.
Recall that the sequence $v(m,\ell')$, $\ell'\in\{1,\ldots,2^{R_S}\}$, is conditionally i.i.d.
    $\sim p_{V|U}(\cdot| u(m)) $, for every given $u(m)$.
Therefore,
\begin{align}
    \mbE_{\mbfC} &\left\{ 2^{\alpha \tilde{D}_{1+\alpha}(\tau_{S|\mcC_m} \| \sigma_{S|u(m)})}\right\}
    \nonumber \\
    &\leq \frac{1}{2^{R_S}} \sum_{\ell = 1}^{2^{R_S}} \mbE_{\mbfC} \left\{ \Tr \left[\left({\sigma_{S|u(m)}}^{\frac{-\alpha}{2(1+\alpha)}} \left( \rho_{S|v(m,\ell),u(m)} \right) {\sigma_{S|u(m)}}^{\frac{-\alpha}{2(1+\alpha)}} \right) \right.\right.
    \nonumber \\
    & \quad\times \left.\left.\left({\sigma_{S|u(m)}}^{\frac{-\alpha}{2(1+\alpha)}} \frac{1}{2^{R_S}}\left( \rho_{S|v(m,\ell),u(m)} 
    + 2^{R_S} \mbE_{V|u(m)} \left[\rho_{S|V,u(m)} \right] \right) {\sigma_{S|u(m)}}^{\frac{-\alpha}{2(1+\alpha)}} \right)^{\alpha} \right]\right\}
\end{align}
We note that the inner expectation term satisfies
$\mbE_{V|u(m)} \left[\rho_{S|V,u(m)} \right]= \sigma_{S|u(m)}$.
Taking the expectation over $u(m)$ and $v(m,\ell)$ as well, we obtain
\begin{align}
    \mbE_{\mbfC} &\left\{ 2^{\alpha \tilde{D}_{1+\alpha}(\tau_{S|\mcC_m} \| \sigma_{S|u(m)})}\right\}
    \nonumber \\
    &\leq
    \mbE_{V, U} \left\{ \Tr \left[\left({\sigma_{S|U}}^{\frac{-\alpha}{2(1+\alpha)}} \left( \rho_{S|V,U} \right) {\sigma_{S|U}}^{\frac{-\alpha}{2(1+\alpha)}} \right) 
    \left({\sigma_{S|U}}^{\frac{-\alpha}{2(1+\alpha)}} \frac{1}{2^{R_S}}\left( \rho_{S|V,U} 
    + 2^{R_S} \sigma_{S|U}  \right) {\sigma_{S|U}}^{\frac{-\alpha}{2(1+\alpha)}} \right)^{\alpha} \right]\right\}.
\end{align}

Next, we introduce the pinching map $\mcE_2$ with respect to $\sigma_{S|u}$. Let $\nu_2$ denote the maximum number of distinct eigenvalues of $\sigma_{S|u}$. Then, by the pinching inequality in \eqref{eq:pinching_inequality}, 
\begin{align}
    \mbE_{\mbfC} &\left\{ 2^{\alpha \tilde{D}_{1+\alpha}(\tau_{S|\mcC_m} \| \sigma_{S|u(m)})}\right\}
    \nonumber \\
    &
{\leq} \mbE_{V, U} \left\{ \Tr \left[\left({\sigma_{S|U}}^{\frac{-\alpha}{2(1+\alpha)}} \left( \rho_{S|V,U} \right) {\sigma_{S|U}}^{\frac{-\alpha}{2(1+\alpha)}} \right) \left({\sigma_{S|U}}^{\frac{-\alpha}{2(1+\alpha)}} \frac{1}{2^{R_S}}\left( \nu_2 \mcE_2(\rho_{S|V,U}) 
    + 2^{R_S} \sigma_{S|U}  \right) {\sigma_{S|U}}^{\frac{-\alpha}{2(1+\alpha)}} \right)^{\alpha} \right] \right\}
    \nonumber \\
     &{\leq} \mbE_{V, U} \left\{ \Tr \left[\left({\sigma_{S|U}}^{\frac{-\alpha}{2(1+\alpha)}} \left( \rho_{S|V,U} \right) {\sigma_{S|U}}^{\frac{-\alpha}{2(1+\alpha)}} \right) 
     \left({\sigma_{S|U}}^{\frac{-\alpha^2}{2(1+\alpha)}} \frac{1}{2^{\alpha R_S}}\left( \nu_2^{\alpha} \mcE_2^{\alpha}(\rho_{S|V,U}) 
    + 2^{\alpha R_S} \sigma_{S|U}^{\alpha}  \right) {\sigma_{S|U}}^{\frac{-\alpha^2}{2(1+\alpha)}} \right) \right] \right\}
\end{align}
where the second inequality holds since the operators $ \mbE_2(\rho_{S|V,U})$ and $\sigma_{S|U}$ commute.
By trace cyclicity, we can write the last expression as 
\begin{align}
    & \mbE_{V, U} \left\{\Tr \left[{\sigma_{S|U}}^{\alpha -\alpha}\rho_{S|V,U}\right] \right\}
     + \mbE_{V, U} \left\{\Tr \left[{\sigma_{S|U}}^{\frac{-\alpha}{2(1+\alpha)}} \rho_{S|V,U} \; {\sigma_{S|U}}^{\frac{-(\alpha +\alpha^2)}{2(1+\alpha)}} \frac{\nu_2^{\alpha}}{2^{\alpha R_S}}\left(  \mcE_2^{\alpha}(\rho_{S|V,U})  \right) {\sigma_{S|U}}^{\frac{-\alpha^2}{2(1+\alpha)}} \right] \right\}
    \nonumber \\
   &= 1 + \frac{\nu_2^{\alpha}}{2^{\alpha R_S}}\mbE_{V, U} \left\{\Tr \left[ \rho_{S|V,U} \;\mcE_2^{\alpha}(\rho_{S|V,U}) \; {\sigma_{S|U}}^{-\alpha} \right]\right\}
   \nonumber \\
   &\stackrel{(a)}{=} 1 + \frac{\nu_2^{\alpha}}{2^{\alpha R_S}}\mbE_{V, U} \left\{\ \Tr \left[ \rho_{S|V,U} \;\mcE_2\left(\mcE_2^{\alpha}(\rho_{S|V,U}) \; {\sigma_{S|U}}^{-\alpha}\right) \right]\right\}
   \nonumber \\
   &\stackrel{(b)}{=} 1 + \frac{\nu_2^{\alpha}}{2^{\alpha R_S}}\mbE_{V, U} \left\{\ \Tr \left[\mcE_2^{1+\alpha}(\rho_{S|V,U}) \; {\sigma_{S|U}}^{-\alpha} \right]\right\}
   \nonumber \\
   &\stackrel{(c)}{=} 1 + \frac{\nu_2^{\alpha}}{2^{\alpha R_S}}\mbE_{V, U}  \left\{ 2^{\alpha \tilde D_{1+\alpha}\left( \mcE_2(\rho_{S|V,U}) \| \mcE_2(\sigma_{S|U}) \right)} \right\}
    \nonumber \\
   &\stackrel{(d)}{\leq} 1 + \frac{\nu_2^{\alpha}}{2^{\alpha R_S}}\mbE_{V, U}  \left\{ 2^{\alpha \tilde D_{1+\alpha}\left( \rho_{S|V,U} \| \sigma_{S|U} \right)} \right\}
   \nonumber \\
   &\stackrel{(e)}{=} 1 + \frac{\nu_2^{\alpha}}{2^{\alpha R_S}} 2^{\alpha \tilde D_{1+\alpha}\left( \rho_{VUS} \| \rho_{V-U-S} \right)},
   \label{eq:lem1_proof_part2} 
   \end{align}

where
\begin{itemize}
    \item[$(a)$] holds since the state is invariant to another application of the pinching map,
    \item[$(b)$] since
    \begin{align*}
        \Tr[\mcE_{\sigma}(\rho) \omega] &= \Tr\left[\sum_{\lambda} \Pi_{\lambda} \rho \Pi_{\lambda} \omega \right]
        = \sum_{\lambda} \Tr\left[\Pi_{\lambda} \rho \Pi_{\lambda} \omega \right]
        = \sum_{\lambda} \Tr\left[\rho \Pi_{\lambda} \omega \Pi_{\lambda}\right]
        = \Tr\left[\rho \mcE_{\sigma}(\omega)\right],
    \end{align*}
    \item[$(c)$]  is obtained by substituting the definition of $\tilde D_{1+\alpha}(\cdot\|\cdot)$, and since $\mcE_2(\sigma_{S|u}) = \sigma_{S|u}$.
    \item[$(d)$]  is obtained by the data processing inequality of $\tilde D_{1+\alpha}(\cdot\|\cdot)$, for $\alpha \in (-\frac{1}{2},0) \; \cup \; (0, \infty) $  \cite{tomamichel2015quantum,MuellerLennert2013}.
    \item[$(e)$]  is obtained by taking the expectation over $V,U$:
    \begin{align}
        \mbE_{V,U} \left\{ 2^{\alpha \tilde D_{1+\alpha}\left( \rho_{S|V,U} \| \sigma_{S|U} \right)} \right\} 
        &= \mbE_{V,U} \left\{ \Tr\left[ \left({\sigma_{S|u}}^{\frac{-\alpha}{2(1+\alpha)}} \rho_{S|V,U} {\sigma_{S|U}}^{\frac{-\alpha}{2(1+\alpha)}} \right)^{1+\alpha} \right]  \right\}
        \nonumber\\
        &= \Tr\left[ \sum_{v,u} p_{V,U}(v,u) \ketbra{v}{v} \otimes \ketbra{u}{u} \otimes \left({\sigma_{S|u}}^{\frac{-\alpha}{2(1+\alpha)}} \rho_{S|v,u} \; {\sigma_{S|u}}^{\frac{-\alpha}{2(1+\alpha)}} \right)^{1+\alpha} \right]
        \nonumber\\
        &= \Tr\left[ \left({\rho_{V-U-S}}^{\frac{-\alpha}{2(1+\alpha)}} \rho_{VUS} \; {\rho_{V-U-S}}^{\frac{-\alpha}{2(1+\alpha)}} \right)^{1+\alpha} \right]
        \nonumber\\
        &= 2^{\alpha \tilde D_{1+\alpha}\left( \rho_{VUS} \| \rho_{V-U-S} \right)}.
    \end{align}
\end{itemize}

We conclude that for $\alpha \in (0,\frac{1}{2})$:
\begin{align}
    \mbE_{\mbfC}  \left\{\tilde{D}_{1+\alpha}(\tau_{S|\mcC_m} \| \sigma_{S|u(m)})\right\}
    & \leq \frac{1}{\alpha} \log_2 \left( 1 + \frac{\nu_2^{\alpha}}{2^{\alpha R_S}} 2^{\alpha \tilde D_{1+\alpha}\left( \rho_{VUS} \| \rho_{V-U-S} \right)} \right) 
    \nonumber \\
    & \leq \frac{1}{\alpha \ln 2}  \frac{\nu_2^{\alpha}}{2^{\alpha R_S}}  2^{\alpha \tilde{D}_{1+\alpha}(\rho_{VUS} \| \rho_{V-U-S})},
    \label{eq:lem1_final}
\end{align}
where the last line follows from $\log_2(1+x) \leq \frac{x}{\ln 2}$ for $x\in (-1,\infty)$.
This completes the proof of the subcodebook property in Lemma~\ref{lem:1}.
\end{proof}




\section{Projector Properties}
\label{appendix:proof_of_lem2}

Our one-shot error analysis makes use of the projector properties below (see proof outline in Subsection~\ref{Subsection:Outline}). 
 \begin{lemma} \label{lem:2}
      For every $\alpha \in \left(0,\frac{1}{2}\right)$, and $R, R_S > 0$ we have:
      \begin{align}
        \Tr[\left( \mbone - \Pi_{VUB}\right) \rho_{VUB}] 
        \leq &\nu_{1}^{\alpha} 2^{\alpha \left[R + R_S - \tilde{D}_{1-\alpha}\left(\rho_{VUB} \| \rho_{VU} \otimes \rho_B \right)\right]}, 
        \label{eq:first_term_lemma_2}  \\ 
         2^{R + R_S} \Tr \left[\Pi_{VUB}\left(\rho_{VU} \otimes \rho_B \right)  \right] 
        \leq & \nu_{1}^{\alpha} 2^{\alpha \left[R + R_S- \tilde{D}_{1-\alpha}\left(\rho_{VUB} \| \rho_{VU} \otimes \rho_B \right)\right]}.  \label{eq:second_term_lemma_2}
      \end{align}
    \end{lemma}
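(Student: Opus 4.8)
The plan is to set $\sigma \coloneqq \rho_{VU}\otimes\rho_B$, $c \coloneqq 2^{R+R_S}$, and $\tilde\rho \coloneqq \mcE_1(\rho_{VUB})$, and to use throughout that $\mcE_1$ is the pinching with respect to $\sigma$: its output $\tilde\rho$ commutes with $\sigma$, hence so does the spectral projector $\Pi_{VUB}=\{\tilde\rho \geq c\sigma\}$, and all three operators are simultaneously diagonalizable. Writing $\tilde\rho=\sum_j a_j\ketbra{j}{j}$ and $\sigma=\sum_j b_j\ketbra{j}{j}$, I will reduce both claims to the single pinched estimate
\begin{align}
    \Tr\!\left[\tilde\rho^{\,1-\alpha}\sigma^{\alpha}\right] \;\leq\; \nu_1^{\alpha}\,2^{-\alpha\tilde{D}_{1-\alpha}(\rho_{VUB}\,\|\,\sigma)} . \label{eq:plan_key_bound}
\end{align}

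For \eqref{eq:first_term_lemma_2}: since $\mbone-\Pi_{VUB}$ is block-diagonal in the eigenbasis of $\sigma$, the self-adjointness of the pinching map gives $\Tr[(\mbone-\Pi_{VUB})\rho_{VUB}]=\Tr[\mcE_1(\mbone-\Pi_{VUB})\,\rho_{VUB}]=\Tr[(\mbone-\Pi_{VUB})\tilde\rho]$. On the support of $\mbone-\Pi_{VUB}$ one has $a_j<cb_j$, so $a_j=a_j^{1-\alpha}a_j^{\alpha}\leq a_j^{1-\alpha}(cb_j)^{\alpha}$, and summing yields $\Tr[(\mbone-\Pi_{VUB})\tilde\rho]\leq c^{\alpha}\Tr[\tilde\rho^{1-\alpha}\sigma^{\alpha}]$. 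For \eqref{eq:second_term_lemma_2}: on the support of $\Pi_{VUB}$ one has $cb_j\leq a_j$, hence $cb_j=(cb_j)^{\alpha}(cb_j)^{1-\alpha}\leq c^{\alpha}b_j^{\alpha}a_j^{1-\alpha}$, and summing gives $c\,\Tr[\Pi_{VUB}\,\sigma]\leq c^{\alpha}\Tr[\tilde\rho^{1-\alpha}\sigma^{\alpha}]$. Combining each of these with \eqref{eq:plan_key_bound} and $c^{\alpha}=2^{\alpha(R+R_S)}$ yields exactly \eqref{eq:first_term_lemma_2} and \eqref{eq:second_term_lemma_2}.

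It remains to prove \eqref{eq:plan_key_bound}, which is the crux. Because $\tilde\rho$ commutes with $\sigma$ and $\sigma^{x}$ commutes with the pinching projectors, $\Tr[\tilde\rho^{1-\alpha}\sigma^{\alpha}]=\Tr[(\sigma^{\frac{\alpha}{2(1-\alpha)}}\tilde\rho\,\sigma^{\frac{\alpha}{2(1-\alpha)}})^{1-\alpha}]=\Tr[\mcE_1(A)^{1-\alpha}]$ with $A\coloneqq\sigma^{\frac{\alpha}{2(1-\alpha)}}\rho_{VUB}\,\sigma^{\frac{\alpha}{2(1-\alpha)}}$, and $\Tr[A^{1-\alpha}]=2^{-\alpha\tilde{D}_{1-\alpha}(\rho_{VUB}\|\sigma)}$ by definition. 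I then write the pinching as a uniform average over $\nu_1$ unitaries, $\mcE_1(A)=\tfrac{1}{\nu_1}\sum_{k=1}^{\nu_1}U_kAU_k^{\dagger}$ with $U_k=\sum_j e^{2\pi i jk/\nu_1}\Pi_j$ over the $\nu_1$ spectral projectors $\Pi_j$ of $\sigma$, pull out the scalar to get $\Tr[\mcE_1(A)^{1-\alpha}]=\nu_1^{-(1-\alpha)}\Tr[(\sum_k U_kAU_k^{\dagger})^{1-\alpha}]$, and apply the subadditivity of $X\mapsto\Tr[X^{1-\alpha}]$ on positive operators (valid since $t\mapsto t^{1-\alpha}$ is concave with $0\mapsto0$ for $1-\alpha\in(0,1)$) to obtain $\Tr[(\sum_k U_kAU_k^{\dagger})^{1-\alpha}]\leq\sum_k\Tr[(U_kAU_k^{\dagger})^{1-\alpha}]=\nu_1\Tr[A^{1-\alpha}]$; altogether $\Tr[\mcE_1(A)^{1-\alpha}]\leq\nu_1^{\alpha}\Tr[A^{1-\alpha}]$, which is \eqref{eq:plan_key_bound}. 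The main obstacle is exactly this last step: the naive pinching inequality $\rho_{VUB}\leq\nu_1\tilde\rho$ together with monotonicity of $\tilde{D}_{1-\alpha}$ under the pinching channel only produces a bound with the wrong orientation (a lower bound on $\Tr[\tilde\rho^{1-\alpha}\sigma^{\alpha}]$), so one genuinely needs the unitary-dilation representation of the pinching combined with the trace-concavity inequality to recover the $\nu_1^{\alpha}$ factor on the correct side.
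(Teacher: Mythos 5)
Your proof is correct, and the reduction of both claims to the single estimate $\Tr[\tilde\rho^{1-\alpha}\sigma^{\alpha}]\leq\nu_1^{\alpha}\,2^{-\alpha\tilde D_{1-\alpha}(\rho_{VUB}\|\sigma)}$ follows essentially the same steps as the paper (split off an $\alpha$-power, use the definition of the projector on the appropriate subspace, then drop it). Where you diverge genuinely is in how that key bound is established. The paper keeps the expression in the form $\Tr\bigl[(\sigma^y\mcE_1(\rho)\sigma^y)\,(\sigma^y\mcE_1(\rho)\sigma^y)^{-\alpha}\bigr]$ with $y=\tfrac{\alpha}{2(1-\alpha)}$, then replaces $\mcE_1(\rho)$ by $\rho$ in the \emph{first} slot exactly (because the $(-\alpha)$-power factor is still block-diagonal, so self-adjointness of pinching applies for free), and in the \emph{second} slot at cost $\nu_1^{\alpha}$ via the operator pinching inequality $\rho\leq\nu_1\mcE_1(\rho)$ combined with the operator anti-monotonicity of $t\mapsto t^{-\alpha}$ on the inverse-power factor. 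You instead collapse the two factors into a single $(\cdot)^{1-\alpha}$, note that $\Tr[\tilde\rho^{1-\alpha}\sigma^{\alpha}]=\Tr[\mcE_1(A)^{1-\alpha}]$ with $A=\sigma^y\rho\sigma^y$, then invoke the unitary-dilation representation $\mcE_1(\cdot)=\tfrac{1}{\nu_1}\sum_kU_k(\cdot)U_k^{\dagger}$ together with the Rotfel'd/McCarthy subadditivity $\Tr[(\sum_kX_k)^p]\leq\sum_k\Tr[X_k^p]$ for $p=1-\alpha\in(0,1)$. Both arguments incur the same $\nu_1^{\alpha}$ loss; yours is arguably cleaner since it treats the $(1-\alpha)$-power as a unit rather than splitting it. One small correction to your closing remark: it is not quite right to say that the pinching inequality only gives the wrong orientation; the paper's use of it is precisely the non-naive application that works, because the inequality lands on the $(\cdot)^{-\alpha}$ factor where anti-monotonicity reverses it. Your subadditivity route simply bypasses the need for that reversal trick.
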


\begin{proof}
We start by showing the upper bound in \eqref{eq:first_term_lemma_2}:
\begin{align}
    \Tr[\left( \mbone - \Pi_{VUB}\right) \rho_{VUB}] 
    &\stackrel{(a)}{=}   \Tr[\mcE_{1}\left(\left( \mbone - \Pi_{VUB}\right) \rho_{VUB}\right)]
    \nonumber \\   
    &\stackrel{(b)}{=}  \Tr[\left( \mbone - \Pi_{VUB}\right) \mcE_{1}(\rho_{VUB})]
    \nonumber \\ 
    &=  \Tr[\left( \mbone - \Pi_{VUB}\right) \mcE_{1}(\rho_{VUB})^{1-\alpha} \mcE_{1}(\rho_{VUB})^{\alpha}]
    \nonumber \\ 
    &\stackrel{(c)}{\leq}   2^{\alpha (R + R_S)} \Tr[\left( \mbone - \Pi_{VUB}\right) \mcE_{1}(\rho_{VUB})^{1-\alpha} (\rho_{VU}\otimes \rho_{B})^{\alpha}] 
    \nonumber \\ 
    &{\leq}   2^{\alpha (R + R_S)} \Tr[\mcE_{1}(\rho_{VUB})^{1-\alpha} (\rho_{VU}\otimes \rho_{B})^{\alpha}] 
    \nonumber \\ 
    &=    2^{\alpha (R + R_S)} \Tr \left[\left( \left(\rho_{VU} \otimes \rho_B \right)^{\frac{\alpha}{2(1-\alpha)}} \mcE_1 \left(\rho_{VUB}\right)  \left(\rho_{VU} \otimes \rho_B \right)^{\frac{\alpha}{2(1-\alpha)}} \right) \right. 
    \nonumber \\ 
    &\quad \left. \times \left( \left(\rho_{VU} \otimes \rho_B \right)^{\frac{\alpha}{2(1-\alpha)}} \mcE_1 \left(\rho_{VUB}\right)  \left(\rho_{VU} \otimes \rho_B \right)^{\frac{\alpha}{2(1-\alpha)}} \right)^{-\alpha} \right] 
    \nonumber \\ 
    & \stackrel{(d)}{=}    2^{\alpha (R + R_S)} \Tr \left[\left( \left(\rho_{VU} \otimes \rho_B \right)^{\frac{\alpha}{2(1-\alpha)}} \rho_{VUB}  \left(\rho_{VU} \otimes \rho_B \right)^{\frac{\alpha}{2(1-\alpha)}} \right) \right. 
    \nonumber \\ 
    &\quad \left. \times \left( \left(\rho_{VU} \otimes \rho_B \right)^{\frac{\alpha}{2(1-\alpha)}} \mcE_1 \left(\rho_{VUB}\right)  \left(\rho_{VU} \otimes \rho_B \right)^{\frac{\alpha}{2(1-\alpha)}} \right)^{-\alpha} \right] 
\end{align}
  where
\begin{itemize}
  \item[$(a)$] holds due to the trace-preserving property of pinching, 
  \item[$(b)$] follows from the definition of $\Pi_{VUB}$ in \eqref{eq:Pi_VUB_def}, 

  \item[$(c)$] from the definition of $\Pi_{VUB}$ and 
  the operator monotonicity of the function $f(x) = x^\alpha$ 
  for $ \alpha \in (0,1]$,    
  \item[$(d)$]  holds since the pinching map $\mathcal{E}_1$ is with respect to the product state 
  $\rho_{VU}\otimes \rho_B$.
  
\end{itemize}
Based on the pinching inequality \eqref{eq:pinching_inequality}, it follows that 
\begin{align}
    \Tr[\left( \mbone - \Pi_{VUB}\right) \rho_{VUB}]    
    &{\leq}   2^{\alpha (R + R_S)} \nu_{1} ^{\alpha} \Tr \left[\left( \left(\rho_{VU} \otimes \rho_B \right)^{\frac{\alpha}{2(1-\alpha)}} \rho_{VUB}  \left(\rho_{VU} \otimes \rho_B \right)^{\frac{\alpha}{2(1-\alpha)}} \right) \right. 
    \nonumber \\ 
    &\quad \left. \times \left( \left(\rho_{VU} \otimes \rho_B \right)^{\frac{\alpha}{2(1-\alpha)}} \rho_{VUB}  \left(\rho_{VU} \otimes \rho_B \right)^{\frac{\alpha}{2(1-\alpha)}} \right)^{-\alpha} \right] 
    \nonumber \\ 
    &=  \nu_{1}^{\alpha} 2^{\alpha \left[R + R_S - \tilde D_{1-\alpha}\left(\rho_{VUB} \| \rho_{VU} \otimes \rho_B \right)\right]}.
    \label{eq:first_term_lemma_2_proof}
\end{align}

The derivation for  \eqref{eq:second_term_lemma_2} follows
 similar steps:
  \begin{align}
  2^{R+R_S} \Tr \left[\Pi_{VUB}\left(\rho_{VU} \otimes \rho_B \right)  \right] 
  &= 2^{R + R_S} \Tr \left[\Pi_{VUB} \left(\rho_{VU} \otimes \rho_B \right)^{1-\alpha}  \left(\rho_{VU} \otimes \rho_B \right)^{\alpha}  \right] 
  \nonumber \\ 
  &{\leq}  2^{R+R_S} 2^{-(R+R_S)(1-\alpha)} \Tr \left[\Pi_{VUB} \mcE_1 \left(\rho_{VUB} \right)^{1-\alpha}  \left(\rho_{VU} \otimes \rho_B \right)^{\alpha}  \right] \label{eq:lem_3_proof_part2_a} 
  \nonumber \\ 
  &{\leq}    2^{\alpha (R+R_S)} \Tr \left[ \mcE_1 \left(\rho_{VUB}\right)^{1-\alpha}  \left(\rho_{VU} \otimes \rho_B \right)^{\alpha}  \right] 
  \nonumber \\ 
  &{\leq} \nu_{1}^{\alpha} 2^{\alpha \left[R + R_S - \tilde D_{1-\alpha}\left(\rho_{VUB} \| \rho_{VU} \otimes \rho_B \right)\right]}. 
  \end{align}
%
\end{proof}

\section{Proof of Proposition \ref{Theorem:1}}
\label{appendix:Th1}
    Let $\Theta_B(m)$ be the state Bob receives:
    \begin{align}
        &\Theta_B(m) = 
        \Tr_{TL}\left[\mcN_{SA \to B}\left(\tilde W^{m} (\sigma^{u(m)}_{SS_0}) \tilde W^{\dag m}   \right) \right]
        \label{Eq:ThetaB}
    \end{align}
    given that the message $m$ was transmitted.
    Furthermore, let $\hat \Theta_B(m)$ be the average state that Bob receives,  when averaged over the subcodebook of $\mcC_V(m)$:
    \begin{align}
         \hat \Theta_B(m) = \frac{1}{2^{R_S}}\sum_{\ell = 1}^{2^{R_S}}  \rho_{B}^{v(m,\ell),u(m)}.
         \label{Eq:hatThetaB}
    \end{align}
    By the symmetry of encoding and decoding, we may assume without loss of generality that Alice sent $m=1$. 
    Consider the pinching-based decoder $\{\beta(m,\ell)\}_{(m,\ell)}$ that has been constructed in Subsection~\ref{Subsection:Decoding}.
    We now bound the average error probability as follows:
    \begin{align}
        \mathbb{E}_{\mbfC}[\bar{p}_e^{(1)}] &= \Pr(\hat{M}\neq 1 \mid M = 1) \nonumber\\
        &\stackrel{(a)}{=} \mbE_{\mbfC} \left\{\Tr\left[\left(\sum_{m' \neq 1, \ell} \beta(m',\ell)\right) \Theta_B(1)\right] \right\}
        \nonumber\\ 
        &\stackrel{(b)}{\leq} 2\mbE_{\mbfC} \left\{\Tr\left[\left(\sum_{m' \neq 1, \ell} \beta(m',\ell)\right)  \hat \Theta_B(1)\right] \right\}
      \nonumber \\ &\phantom{\stackrel{(b)}{\leq}} 
         + 2\mbE_{\mbfC} \left\{ \left| \sqrt{ \Tr\left[\left(\sum_{m' \neq 1, \ell} \beta(m',\ell)\right) \Theta_B(1) \right]} \right. \right. 
         \left. \left. - \sqrt{\Tr\left[\left(\sum_{m' \neq 1, \ell} \beta(m',\ell)\right)  \hat \Theta_B(1)\right]} \right|^2 \right\} 
        \nonumber\\ 
        &\stackrel{(c)}{\leq} 2\mbE_{\mbfC} \left\{\Tr\left[\left(\sum_{m' \neq 1, \ell} \beta(m',\ell)\right)  \hat \Theta_B(1)\right] \right\}
        + 2\mbE_{\mbfC} \left\{P\left(\Theta_B(1),  \hat \Theta_B(1) \right)^2 \right\}
        \label{eq:Union_Error}
    \end{align}
    where $(a)$ follows since the error events are disjoint, 
     $(b)$ holds due to the following: Note that
     $(x-y)^2\geq 0$ implies $(x+y)^2\leq 2(x^2+y^2)$.
     Therefore,
     $z = \left(\sqrt{w} + \sqrt{z}  -\sqrt{w} \right)^2\leq 2w+2\abs{\sqrt{z}  -\sqrt{w}}^2$.
     The inequality follows by taking 
     $z=\Tr\left[\left(\sum_{m' \neq 1, \ell} \beta(m',\ell)\right) \Theta_B(1)\right]$
     and $w=\Tr\left[\left(\sum_{m' \neq 1, \ell} \beta(m',\ell)\right)  \hat \Theta_B(1)\right]$.
     Then, $(c)$ is obtained by $\sum_{m' \neq 1, \ell} \beta(m',\ell) \leq \mbone$
     and 
     $\left| \sqrt{\Tr[\Delta \sigma]} - \sqrt{\Tr[\Delta \rho]} \right| \leq P(\sigma, \rho)$
      for every pair of quantum states $\rho , \sigma \in \mathscr{D}(\mcH)$ and  $0\leq \Delta \leq \mbone$, based on
         \cite[Fact 7]{Anshu2020}.
        
    Consider the first term on the right-hand side of \eqref{eq:Union_Error}:
    \begin{align}
      \sum_{m' \neq 1, \ell} \mbE_{\mbfC} \left\{\Tr\left[\left( \beta(m',\ell)\right) \hat \Theta_B(1)\right] \right\}
      &=   \frac{1}{2^ {R_S}}\sum_{\ell'}    \sum_{m' \neq 1, \ell} \mbE_{C} \left\{\Tr\left[\left( \beta(m',\ell)\right) \rho_{B}^{v(1,\ell'), u(1) }\right] \right\}
       \nonumber \\ 
      &=   \sum_{m' \neq 1, \ell} \mbE_{\mbfC} \left\{\Tr\left[\left( \beta(m',\ell)\right) \rho_{B}^{v(1,1), u(1)}\right] \right\}
       \nonumber \\ 
      &\leq  \mbE_{\mbfC} \left\{\Tr[\left( \mbone - \beta(1,1) \right)  \rho_B^{v(1,1), u(1)} ]\right\}
       \nonumber \\ 
    &\leq  \mbE_{\mbfC} \left\{\Tr[\left( \mbone - \left( \sum_{m',\ell} \gamma(m',\ell) \right)^{-\frac{1}{2}} \gamma(1,1)  \left( \sum_{m',\ell} \gamma(m',\ell) \right)^{-\frac{1}{2}} \right)  \rho_B^{v(1,1), u(1)} ]\right\}
       \nonumber \\ 
      &\leq  2\mbE_{\mbfC} \left\{\Tr[\left( \mbone - \gamma(1,1) \right)  \rho_B^{v(1,1), u(1)} ]\right\}
      \nonumber\\ &\phantom{\leq}
      + 4 \sum_{(m',\ell) \neq (1,1)} \mbE_{\mbfC} \left\{\Tr[\left( \gamma(m',\ell) \right)  \rho_B^{v(1,1), u(1)} ]\right\},
      \label{eq:theorem1_proof_part2}
    \end{align}
    where the last inequality is based on the Hayashi-Nagaoka operator inequality \cite{HayashiNagaoka2003}:
$\mbone -\left(S+T \right)^{-\frac{1}{2}} S \left(S+T \right)^{-\frac{1}{2}} \leq 2 \left(\mbone-S \right) + 4T$, for every
     $0 \leq S \leq \mbone$ and $ T\geq 0$  on a Hilbert space $\mcH$. 
    In our case, we set $S = \gamma(1,1)$ and $T = \sum_{(m',\ell) \neq (1,1)} \gamma(m',\ell)$.

    Our next steps follow similar considerations as in \cite{Anshu2020,zivarifard2024covert}. Specifically, we obtain an upper bound on each term on the right-hand side of \eqref{eq:theorem1_proof_part2} by using the properties established in  Appendix~\ref{appendix:proof_of_lem1} and Appendix~\ref{appendix:proof_of_lem2}. 
    We bound  the first term on the right-hand side of \eqref{eq:theorem1_proof_part2} as follows:
    \begin{align}
      &2\mbE_{\mbfC} \left\{\Tr[\left( \mbone - \gamma(1,1) \right)  \rho_B^{v(1,1), u(1)} ]\right\}
      \nonumber \\ 
      &\stackrel{(a)}{=} 2\mbE_{\mbfC} \left\{\Tr[\left( \mbone - \Tr_{VU}[\Pi_{VUB} (\ketbra{v(1,1)}{v(1,1)} \otimes \ketbra{u(1)}{u(1)} \otimes \mbone_B)]  \right)  \rho_B^{v(1,1), u(1)} ]\right\} 
      \nonumber \\ 
      &= 2\mbE_{\mbfC} \left\{\Tr[\left( \mbone - \Tr_{VU}[\Pi_{VUB} (\ketbra{v(1,1)}{v(1,1)} \otimes \ketbra{u(1)}{u(1)} \otimes \mbone_B)]  \right) \left( \id_{VU} \otimes \rho_B^{v(1,1), u(1)} \right) ]\right\} 
      \nonumber \\ 
      & \stackrel{(b)}{=} 2\mbE_{\mbfC} \left\{\Tr[\left( \mbone - \Pi_{VUB} (\ketbra{v(1,1)}{v(1,1)} \otimes \ketbra{u(1)}{u(1)} \otimes \mbone_B)  \right) \left( \id_{VU}  \otimes \rho_B^{v(1,1), u(1)} \right) ]\right\} 
      \nonumber \\ 
      &= 2 \left[\Tr[ \left( \mbone - \Pi_{VUB} \right)\mbE_{\mbfC} \left\{\ketbra{v(1,1)}{v(1,1)} \otimes  \ketbra{u(1)}{u(1)} \otimes \rho_B^{v(1,1), u(1)} \right\} ]\right]
      \nonumber \\ 
      &= 2 \left[\Tr[ \left( \mbone - \Pi_{VUB}  \right)\left( \sum_{v,u} p_{VU}(v,u) \ketbra{v}{v} \otimes \ketbra{u}{u} \otimes \rho_{B}^{v,u} \right) ]\right]
      \nonumber \\ 
      &\stackrel{(c)}{=} 2 \Tr[\left( \mbone - \Pi_{VUB}\right) \rho_{VUB}] 
      \nonumber \\ 
      &\stackrel{(d)}{\leq} 2 \cdot \nu_{1}^{\alpha} 2^{\alpha \left[R + R_S - \tilde D_{1-\alpha}\left(\rho_{VUB} \| \rho_{VU} \otimes \rho_B \right)\right]}.
      \label{eq:theorem1_proof_part2_First}
    \end{align}
where
    \begin{itemize}
      \item[$(a)$] is obtained by the definition of $\gamma(1,1)$ as in \eqref{eq:gamma_def}, 
      \item[$(b)$] follows from trace linearity, 
      \item[$(c)$] from taking the expectation with respect to the random codebook $\mbfC$, and 
      \item[$(d)$] from Lemma \ref{lem:2}.
    \end{itemize}

    As for the second term in \eqref{eq:theorem1_proof_part2}:
    \begin{align}
    &4 \sum_{(m',\ell) \neq (1,1)} \mbE_{\mbfC} \left\{\Tr[\left( \gamma(m',\ell) \right)  \rho_B^{v(1,1), u(1)} ]\right\} 
    \nonumber \\ 
    &\stackrel{(a)}{=}   4 \sum_{(m',\ell) \neq (1,1)}  \mbE_{\mbfC} \left\{\Tr[\left(\Tr_{VU}[\Pi_{VUB} (\ketbra{v(m',\ell)}{v(m',\ell)} \otimes \ketbra{u(m')}{u(m')} \otimes \mbone_B)]  \right)  \rho_B^{v(1,1), u(1)} ] \right] 
    \nonumber \\ 
    &= 4 \sum_{(m',\ell) \neq (1,1)}  \mbE_{\mbfC} \left\{\Tr[\left(\Pi_{VUB} (\ketbra{v(m',\ell)}{v(m',\ell)} \otimes \ketbra{u(m')}{u(m')}  \otimes \rho_B^{v(1,1), u(1)}   \right) ] \right\}
    \nonumber \\ 
    &\stackrel{(b)}{=} 4 \sum_{(m',\ell) \neq (1,1)}  \Tr \left[\mbE_{\mbfC} \left\{\left(\Pi_{VUB} (\ketbra{v(m',\ell)}{v(m',\ell)} \otimes \ketbra{u(m')}{u(m')}  \otimes \rho_B^{v(1,1), u(1)}   \right) \right\} \right] 
    \nonumber \\ 
    &=   4 \sum_{(m',\ell) \neq (1,1)}  \Tr \; \Pi_{VUB} \left( \sum_{v, v', u, u'} \Pr\left(v(1,1) = v, v(m',\ell)=v', u(1) = u, u(m') = u'\right)  \ketbra{v'}{v'} \otimes \ketbra{u'}{u'} \otimes \rho_{B}^{v,u} \right) 
    \nonumber \\ 
    &=   4\sum_{(m',\ell) \neq (1,1)}  \Tr \left[\Pi_{VUB} \left( \sum_{v, v', u, u'} p_{VU}(v,u)p_{VU}(v',u') \ketbra{v'}{v'} \otimes \ketbra{u'}{u'} \otimes \rho_{B}^{v,u} \right) \right]
    \nonumber \\ 
    &=   4 \sum_{(m',\ell) \neq (1,1)}  \Tr \left[\Pi_{VUB} \left( \sum_{v',u'} p_{VU}(v',u') \ketbra{v'}{v'} \otimes \ketbra{u'}{u'} \right) \otimes \left(\sum_{v,u}  p_{VU}(v,u) \rho_{B}^{v,u} \right) \right]
    \nonumber \\ 
    &\stackrel{(c)}{=}   4 \sum_{(m',\ell) \neq (1,1)} \Tr \left[\Pi_{VUB}\left(\rho_{VU} \otimes \rho_B \right)  \right] 
    \nonumber \\ 
    &\stackrel{(d)}{\leq}   4 \cdot 2^{R+R_S} \Tr \left[\Pi_{VUB}\left(\rho_{VU} \otimes \rho_B \right)  \right] 
    \nonumber \\ 
    &\stackrel{(e)}{\leq} 4 \cdot \nu_{1}^{\alpha} 2^{\alpha \left[R + R_S - \tilde D_{1-\alpha}\left(\rho_{VUB} \| \rho_{VU} \otimes \rho_B \right)\right]} 
    .
    \label{eq:theorem1_proof_part2_Second}
  \end{align}
  where
  \begin{itemize}
    \item[$(a)$] holds by the definition of $\gamma(m,\ell)$  in \eqref{eq:gamma_def},
    \item[$(b)$] by  linearity, 
    \item[$(c)$] follows by taking the expectation with respect to the random codebook, 
    \item[$(d)$] is obtained due to the fact that $v(m,\ell)$ is conditionally independent of $v(m',\ell')$, given $u(m)$, for $(m, \ell)\neq (m', \ell')$,
    \item[$(e)$] follows from Lemma \ref{lem:2}.
  \end{itemize}
  By plugging \eqref{eq:theorem1_proof_part2_First}-\eqref{eq:theorem1_proof_part2_Second} into \eqref{eq:theorem1_proof_part2}, we obtain the following bound on the confusion error term:
  \begin{align}
    \sum_{m' \neq 1, \ell} \mbE_{\mbfC} \left\{\Tr\left[\left( \beta(m',\ell)\right) \hat{\Theta}_{B}{(1)}\right] \right\}
    &\leq 6 \cdot \nu_{1}^{\alpha} 2^{\alpha \left[R + R_S - \tilde D_{1-\alpha}\left(\rho_{VUB} \| \rho_{VU} \otimes \rho_B \right)\right]}.
    \label{eq:prop2_a}
  \end{align} 
  Hence, the expected error probability is bounded by 
      \begin{align}
        \mathbb{E}_{\mbfC}[\bar{p}_e^{(1)}] &\leq  12 \cdot \nu_{1}^{\alpha} 2^{\alpha \left[R + R_S - \tilde D_{1-\alpha}\left(\rho_{VUB} \| \rho_{VU} \otimes \rho_B \right)\right]}
        + 2\mbE_{\mbfC} \left\{P\left(\Theta_B(1),  \hat \Theta_B(1) \right)^2 \right\}
        \label{eq:Union_Error_First_Result}
    \end{align}
    by \eqref{eq:Union_Error}.

  It remains to bound the last term for $\alpha \in (0,\frac{1}{2})$:
  \begin{align}
      &\mbE_{\mbfC} \left\{P\left(\Theta_B{(1)}, \hat{\Theta}_B{(1)} \right)^2\right\} 
      \nonumber \\ 
      &\stackrel{(a)}{=} \mbE_{\mbfC} \left\{P\left( \Tr_{T  L}\left[\mcN_{SA \to B}\left(\tilde W^{(1)} (\sigma^{u(1)}_{SS_0}) \tilde W^{\dag (1)}   \right) \right],
      \frac{1}{2^{R_S}}\sum_{\ell = 1}^{2^{R_S}}  \rho_{B}^{v(1,\ell),u(1)}\right)^2 \right\}  \nonumber
      \nonumber \\ 
      &= \mbE_{\mbfC} \left\{P\left( \Tr_{TL}\left[\mcN_{SA \to B}\left(\tilde W^{(1)} (\sigma^{u(1)}_{SS_0 }) \tilde W^{\dag (1)}   \right) \right],
      \frac{1}{2^{R_S}}\sum_{\ell = 1}^{2^{R_S}} \Tr_{T} \left[ \mcN_{SA \to B} \left(\rho_{SAT}^{v(1,\ell),u(1)} \right) \right]\right)^2 \right\}  
      \nonumber \\ 
      &\stackrel{(b)}{\leq} \mbE_{\mbfC} \left\{P\left( \phi_{SATL}^{(1)} , 
     \tilde W^{(1)} (\sigma^{u(1)}_{SS_0}) \tilde W^{\dag (1)} \right)^2\right\}
      \nonumber \\ 
      &\stackrel{(c)}{=} \mbE_{\mbfC} \left\{ P\left(\frac{1}{2^{R_S}} \sum_{\ell = 1}^{2^{R_S}} \rho_{S}^{v(1,\ell),u(1)},  \sigma_{S}^{u(1)} \right)^2\right\}
      \nonumber \\ 
      &\stackrel{(d)}{=} \mbE_{\mbfC} \left\{ P\left(\tau_{S|\mcC_1}, \sigma_S^{u(1)} \right)^2\right\}
      \nonumber \\ 
      &= \mbE_{\mbfC} \left\{1 - F^2\left(\tau_{S|\mcC_1}, \sigma_S^{u(1)} \right)\right]
      \nonumber \\ 
      & \stackrel{(e)}{\leq} \mbE_{\mbfC} \left\{ 1-2^{-\tilde D_{1+\alpha} \left(\tau_{S|\mcC_1}\|\sigma_S^{u(1)} \right)}\right\}
      \nonumber \\ 
      & \stackrel{(f)}{\leq} \ln2 \mbE_{\mbfC} \left\{ \tilde D_{1+\alpha}\left(\tau_{S|\mcC_1}\|  \sigma_S^{u(1)} \right)\right\}
      \nonumber \\ 
      &\stackrel{(g)}{\leq} \frac{1}{\alpha}  \frac{\nu_2^{\alpha}}{2^{\alpha R_S}}  2^{\alpha \tilde{D}_{1+\alpha}(\rho_{VUS} \| \rho_{V-U-S})},
      \label{eq:prop2_b}
  \end{align}
  where
  \begin{itemize}
    \item[$(a)$] follows by substituting the definition of $\Theta_B(m)$ and $\hat \Theta_B(m)$ in \eqref{Eq:ThetaB} and \eqref{Eq:hatThetaB}, respectively,
    
    \item[$(b)$]  from the fact that monotonicity of the purified distance, with respect to the 
    quantum channel $\Tr_{T}\mathcal{N}_{SA \to B}(\cdot)$, and
    
    \item[$(c)$] from \eqref{eq:enc_purd}.
    \end{itemize}
Furthermore,
    \begin{itemize}
    
    \item[$(d)$] holds as we introduce the notation $\tau_{S| \mathcal{C}_{1}}$ to represent the average state: $\tau_{S| \mathcal{C}_{1}}\equiv \frac{1}{2^{R_{S}}}\sum_{\ell }\rho_{S}^{v(1,\ell),u(1)}$, and since $\sum_{v} p_{V|U}(v|u) \left[ \Tr_{A}{\rho_{SA}^{v,u}} \right] =  \Tr_{S_0}{\sigma_{S S_0}^u}$.
    \item[$(e)$] as $\tilde D_{\alpha}(\cdot \| \cdot)$ is  monotonically increasing in $\alpha$, thus  
    $ F^2(\rho, \sigma) = 2^{-\tilde D_{1/2}(\rho\|\sigma)} \geq 2^{-\tilde D_{1+\alpha}(\rho\|\sigma)}$ for every pair of quantum states $\rho, \sigma \in \mathscr{D}(\mathcal{H})$ and  $\alpha > -\frac{1}{2}$ (see \cite[Corollary 4.3]{tomamichel2015quantum}).
    
    \item[$(f)$] follows from the inequality $1-2^{-x} \le x \ln 2$, and
    
    \item[$(g)$]  by applying Lemma \ref{lem:1}.
\end{itemize}
    Proposition \ref{Theorem:1} then follows from \eqref{eq:Union_Error_First_Result}-\eqref{eq:prop2_b}.
    \qed

\section{Cardinality Bounds}
\label{appendix:cardinality}
Recall that the achievable rate established in Theorem~\ref{Theorem:non-causal} is given by:
\begin{align}
        \rqad_{\text{n-c}} \left( \mcN\circ T \right) = I(VU;B)_{\rho} - I(V;S|U)_{\rho},
\end{align}
where the optimization is over the input distribution $p_{VU}$, the pure action states $\{\ket{\sigma_G^u}\}$, and the conditional input states $\{\rho_{SA}^{v,u}\}$, such that $\Tr_{A}[\rho_{VUAS}] = \Tr_{S_0}[\sigma_{VUSS_0}]$.
To bound the alphabet size of the auxiliary random variables $U$ and $V$ required for this optimization, we use the Fenchel-Eggleston-Carathéodory lemma \cite{Eggleston1966} and arguments similar to those in \cite{Pereg2022}.

\begin{align}
    L_U &= |\mathcal{H}_S|^2|\mathcal{H}_{S_0}|^2,  \\
    L_{0} &= |\mathcal{H}_S|^2|\mathcal{H}_A|^2,  \\
    L_{V} &= L_U \cdot L_0 .
\end{align}

First, we consider the outer variable $U$. Fix the distribution $p_{V|U}(v|u)$ and the conditional input states $\{\rho_{SA}^{v,u}\}$. Consider the input ensemble $\{p_{U}(u), \sigma_G^u\}$. A Hermitian matrix of dimension $d$ is specified by $d^2$ real parameters. Since a density matrix is Hermitian and has unit trace, it is specified by $d^2-1$ real parameters.
We define a map $f_0: \mathcal{U} \to \mathbb{R}^{L_U}$ by
\begin{align}
    f_0(u) = \Big( w(\sigma_{SS_0}^u), -H(B|V, U=u)_\rho + I(V;S|U=u)_\rho \Big)
\end{align}
where $w(\cdot)$ denotes the vector representation of the state, $\ket{\sigma_{SS_0}^u} = T_{G \to SS_0}\ket{\sigma_G^u}$.
The map $f_0$ can be extended to probability distributions as follows,
\begin{align}
    F_0: p_U \mapsto \sum_{u \in \mathcal{U}} p_U(u) f_0(u) \,.
\end{align}
According to the Fenchel-Eggleston-Carathéodory lemma \cite{Eggleston1966}, any point in the convex closure of a connected compact set within $\mathbb{R}^d$ belongs to the convex hull of $d$ points in the set. Since the map $F_0$ is linear, it maps the set of distributions on $\mathcal{U}$ to a connected compact set in $\mathbb{R}^{L_U}$. Thus, for every $p_U$, there exists a probability distribution $p_{\tilde{U}}$ on a subset $\tilde{\mathcal{U}} \subseteq \mathcal{U}$ of size $L_U$, such that $F_0(p_{\tilde{U}}) = F_0(p_U)$.
This preserves the average environment state $\sigma_{SS_0} = \sum_u p_U(u) \sigma_{SS_0}^u$, the average channel input $\rho_{SA} = \sum_{v,u} p_U(u)p_{V|U}(v|u) \rho_{SA}^{v,u}$, the output state $\rho_B=\mcN_{SA \to B}(\rho_{SA})$, and the quantity $I(VU; B)_\rho - I(V; S|U)_\rho = H(B)_\rho - H(B|VU)_\rho + I(V; S|U)_\rho$.
We deduce that the alphabet size can be restricted to $|\mathcal{U}| \leq L_U$.

We move to the alphabet size of the inner variable $V$. Fix $u \in \mathcal{U}$ and the corresponding distribution $p_{U}$. Define the map $g_u: \mathcal{V} \to \mathbb{R}^{L_{0}}$ by
\begin{align}
    g_u(v) = \Big( w(\rho_{SA}^{v,u}), -H(B|V=v, U=u)_\rho + H(S|V=v, U=u)_\rho \Big)
\end{align}
Now, the extended map is
\begin{align}
    G_u: p_{V|U}(\cdot|u) \mapsto \sum_{v \in \mathcal{V}_u} p_{V|U}(v|u) g_u(v) \,.
\end{align}

By the Fenchel-Eggleston-Carathéodory lemma \cite{Eggleston1966}, for every $u$, there exists a conditional distribution on a subset $\tilde{\mathcal{V}} \subseteq \mathcal{V}$ of size $L_{0}$, such that $G_u(p_{\tilde{V}|U}) = G_u(p_{V|U})$.
The alphabet cardinality is bounded by
\begin{align}
    |\mathcal{V}| \leq  L_{V},
\end{align}
while preserving the input state $\sigma_{G}^u$, the channel input state $\rho_{SA}^{u}$
, and $I(VU; B)_\rho - I(V; S|U)_\rho$. 

\section{Derivations for Section~\ref{section:examples}}
\label{appendix:example_derivations}
We provide detailed derivations for the depolarizing memory example of Section~\ref{section:examples}.

\subsection{Causal CSI}

The action isometry $T_{G \to SS_0}$ thus produces the bipartite state below:
\begin{align}
    \ket{\sigma_{SS_0}^u} &=T_{G \to SS_0}\ket{\sigma_G^u}
    \nonumber \\
    &= \ket{u}_{S} \otimes \left(\sqrt{1-p}\ket{0}_{S_0} + (-1)^u \sqrt{\frac{p}{3}}\ket{3}_{S_0}\right) + \sqrt{\frac{p}{3}}\ket{\bar{u}}_S \otimes \left(\ket{1}_{S_0} + i(-1)^u\ket{2}_{S_0} \right)
        \nonumber \\
    &= \sqrt{1-\frac{2p}{3}}\ket{u}_{S} \otimes \ket{\theta_{03}^u}_{S_0} + \sqrt{\frac{2p}{3}}\ket{\bar{u}}_S \otimes \ket{\theta_{12}^u}_{S_0}
\end{align}
where $\bar{u}=1-u$ denotes the flipped bit, $\ket{\theta_{03}^u}=\sqrt{\frac{1}{1-2p/3}}\left(\sqrt{1-p}\ket{0} + (-1)^u\sqrt{\frac{p}{3}}\ket{3}\right)$
and $\ket{\theta_{12}^u}=\frac{1}{\sqrt2}\left(\ket{1} +i(-1)^u\ket{2} \right) $.
Note that the states $\ket{\theta_{03}^u}$ and $\ket{\theta_{12}^u}$ are orthogonal.

The resulting channel input is thus
\begin{align}
    \ket{\rho_{SA}^u} &= \mbone_S \otimes F^u_{S_0 \to A}\ket{\sigma_{SS_0}^{u}}
    \\
    &= \sqrt{1-\frac{2p}{3}}\ket{u}_{S} \otimes \ket{\perp}_{A} + \sqrt{\frac{2p}{3}}\ket{\bar{u}}_S \otimes \ket{u}_{A}
\end{align}
and the reduced state is
$\rho_A^u = \Tr_S(\rho_{SA}^u)=\left({1-\frac{2p}{3}}\right)\ketbra{\perp}+\frac{2p}{3}\ketbra{u}$.

The resulting output state is
\begin{align}
    \mcN_{SA \to B}(\ketbra{\rho_{SA}^u}) &= (\mbone\otimes \bra{\perp}) \ketbra{\rho_{SA}^u} (\mbone\otimes \ket{\perp})   +
      \mathcal{D}_p  \left(  \Pi_{01} \cdot   \rho_{A}^u \cdot   \Pi_{01} \right)
      \\
       &=\left({1-\frac{2p}{3}}\right)\ketbra{u}+ \frac{2p}{3}\mathcal{D}_p  \left(  \ketbra{{u}} \right)
       \\
       &=\left({1-\frac{2p}{3}}\right)\ketbra{u}+ \frac{2p}{3}  \left[  \left({1-\frac{2p}{3}}
       \right)\ketbra{{u}}
       +\frac{2p}{3}\ketbra{\bar{u}} \right]
       \\
       &=({1-\delta^2})\ketbra{{u}}+ \delta^2\ketbra{\bar{u}} ,
\end{align}
where $\delta = \frac{2p}{3}$.

\subsection{Non-Causal CSI}

One can verify that the marginal constraint~\eqref{eq:marginal_constraint} is satisfied:
\begin{align}
    \Tr_{VA}\!\left[\rho^u_{VSA}\right]
    &= (1-\delta)\ketbra{u}_S + \delta\ketbra{\bar{u}}_S
     = \Tr_{S_0}\!\left[\ketbra{\sigma^u_{SS_0}}\right].
     \label{Equation:Example_Environement_State}
\end{align}
The resulting output states $\rho^{v,u}_B = \mcN_{SA\to B}(\rho^{v,u}_{SA})$ are given
as follows.
For $V=0$, the idle input forwards $S$ to the output while the rewrite input undergoes
depolarization:
\begin{align}
    \mcN_{SA\to B}(\rho^{0,u}_{SA})
    &= \frac{(1-\delta)\ketbra{u}
           + \delta(1-\alpha)\,\mcD_p(\ketbra{u})}{1-\delta\alpha} \nonumber \\
    &= \left(1-\frac{\delta^2(1-\alpha)}{1-\delta\alpha}\right)\ketbra{u} + \frac{\delta^2(1-\alpha)}{1-\delta\alpha}\ketbra{\bar{u}}.
\end{align}
For $V=1$, Alice transmits $\ket{\perp}_A$, hence the output is simply the environment state, i.e.,
$\mcN_{SA\to B}(\rho^{1,u}_{SA}) = \ketbra{\bar{u}}_B$.

We now evaluate the rate formula~\eqref{eq: R-nonc}.
The first mutual information term is given by
\begin{align}
    I(VU;B)_\rho
    = 1 - (1-\delta\alpha)\,h_2\!\left(\frac{\delta^2(1-\alpha)}{1-\delta\alpha}\right).
    \label{eq:IVU_B_nc}
\end{align}
Next, consider the ``penalty" term, $I(V;S|U)_\rho$.
First, we note that 
 $H(S|U)_\rho=h_2(\delta)$ by \eqref{Equation:Example_Environement_State}.
Furthermore, the entropies of
the environment states are
$\rho^{0,u}_S $ and $\rho^{1,u}_S $ are
$H(S|V=0,U=u)_\rho= h_2\!\left(\frac{\delta(1-\alpha)}{1-\delta\alpha}\right)$, and
$H(S|V=1,U=u)_\rho = H(\ketbra{\bar{u}})=0$, respectively.
This yields
\begin{align}
    I(V;S|U)_\rho
    = h_2(\delta)
    - (1-\delta\alpha)\,h_2\!\left(\frac{\delta(1-\alpha)}{1-\delta\alpha}\right).
    \label{eq:IVS_U_nc}
\end{align}

\end{appendices}

\section*{Acknowledgments}
The authors thank Gerhard Kramer (TUM)
for useful discussions.
Korenberg and Pereg were supported by  Israel Science Foundation (ISF), 
 Grants n. 939/23 and 2691/23,
Deutsche Forschungsgemeinschaft (DFG) through the
 German-Israeli Project Cooperation (DIP) 
 n. 2032991, Ollendorff-Minerva Center of the Technion 
 n. 86160946,  Nevet Program of the 
 Helen Diller Quantum Center at the Technion 
 n. 	2033613, and
  the Planning and Budgeting Committee of the Council for Higher Education of Israel
(VATAT) for the QERNEL Quantum  Computing Research Hub n.
2072651.
Pereg was also supported by Chaya Chair n. 8776026. 

\bibliography{references}
\end{document}